\newcounter{mnote}
\def\xmarginnote{%
  \xymarginnote{\hskip -\marginparsep \hskip -\marginparwidth}}
\def\ymarginnote{%
  \xymarginnote{\hskip\columnwidth \hskip\marginparsep}}
\long\def\xymarginnote#1#2{%
\vadjust{#1%
\smash{\hbox{{%
        \hsize\marginparwidth
        \@parboxrestore
        \@marginparreset
\footnotesize #2}}}}}
\def\mnoteson{%
\gdef\mnote##1{\refstepcounter{mnote}\label{##1}%
  \zsavepos{##1}%
  \ifnum20432158>\number\zposx{##1}%
  \xmarginnote{{\color{blue}\bf $\langle$\arabic{mnote}$\rangle$}}%
  \else
  \ymarginnote{{\color{blue}\bf $\langle$\arabic{mnote}$\rangle$}}%
  \fi%
}
  }
\gdef\mnotesoff{\gdef\mnote##1{}}
\tikzstyle{block} = [draw,fill=blue!20,minimum size=2em]
\let\iftwocolumn\if@twocolumn
\g@addto@macro\@twocolumntrue{\let\iftwocolumn\if@twocolumn}
\g@addto@macro\@twocolumnfalse{\let\iftwocolumn\if@twocolumn}
\DeclareMathAlphabet{\mathpzc}{OT1}{pzc}{m}{it}
\def\multi@nostar#1#2{%
  \expandafter\def\csname multi#1\endcsname##1{%
    \if ##1.\let\next=\relax \else
    \def\next{\csname multi#1\endcsname}     
    \expandafter\newcommand\csname #1##1\endcsname{#2}
    \fi\next}}
\def\multi@star#1#2{%
  \expandafter\def\csname #1\endcsname##1{#2}
  \multi@nostar{#1}{#2}
}
\newcommand{\multi}{%
  \@ifstar \multi@star \multi@nostar}
\newcommand{\dotleq}{\buildrel \textstyle  .\over {\smash{\lower
      .2ex\hbox{\ensuremath\leqslant}}\vphantom{=}}}
\newcommand{\dotgeq}{\buildrel \textstyle  .\over {\smash{\lower
      .2ex\hbox{\ensuremath\geqslant}}\vphantom{=}}}
\newcommand{\bM}{\begin{bmatrix}}
\newcommand{\eM}{\end{bmatrix}}
\newcommand{\bSM}{\left[\begin{smallmatrix}}
\newcommand{\eSM}{\end{smallmatrix}\right]}
\renewcommand*\env@matrix[1][*\c@MaxMatrixCols c]{%
  \hskip -\arraycolsep
  \let\@ifnextchar\new@ifnextchar
  \array{#1}}
\DeclarePairedDelimiter\abs{\lvert}{\rvert}
\DeclarePairedDelimiter\Set{\{}{\}}
\newcommand{\imod}[1]{\allowbreak\mkern10mu({\operator@font mod}\,\,#1)}
\newcommand{\threecols}[3]{
\hbox to \textwidth{%
      \normalfont\rlap{\parbox[b]{\textwidth}{\raggedright#1\strut}}%
        \hss\parbox[b]{\textwidth}{\centering#2\strut}\hss
        \llap{\parbox[b]{\textwidth}{\raggedleft#3\strut}}%
    }
}
\newcommand{\reason}[2][\relax]{
  \ifthenelse{\equal{#1}{\relax}}{
    \left(\text{#2}\right)
  }{
    \left(\parbox{#1}{\raggedright #2}\right)
  }
}
\newcommand{\utag}[2]{\mathop{#2}\limits^{\text{(#1)}}}
\newcommand{\uref}[1]{(#1)}
\let\SavedDoubleVert\relax
\let\protect\relax
  \xdef\extendvert{\protect\expandafter\noexpand\csname extendvert \endcsname}
\gdef\csname extendvert \endcsname#1{\mskip-5mu \left.%
      \ifx\SavedDoubleVert\relax \let\SavedDoubleVert\|\fi
     \:{\let\|\SetDoubleVert
       \mathcode`\|32768\let|\SetVert
     #1}\:\right.\mskip-5mu}
\def\SetVert{\@ifnextchar|{\|\@gobble}
    {\egroup\;\mid@vertical\;\bgroup}}
\def\SetDoubleVert{\egroup\;\mid@dblvertical\;\bgroup}
 \edef\@tempa{\meaning\middle}
 \edef\@tempb{\string\middle}
 \def\mid@vertical{\middle|}
 \def\mid@dblvertical{\middle\SavedDoubleVert}
 \def\mid@vertical{\mskip1mu\vrule\mskip1mu}
 \def\mid@dblvertical{\mskip1mu\vrule\mskip2.5mu\vrule\mskip1mu}
\newenvironment{ybox}{
	\setlength{\FrameSep}{1.5mm}
	\setlength{\FrameRule}{0mm}
  \MakeFramed {\FrameRestore}}%
{\endMakeFramed}
\newenvironment{gbox}{
	\setlength{\FrameSep}{1.5mm}
\setlength{\FrameRule}{0mm}
  \MakeFramed {\FrameRestore}}%
{\endMakeFramed}
\let\theparentequation\theequation
\patchcmd{\theparentequation}{equation}{parentequation}{}{}
\renewenvironment{subequations}[1][]{
	\refstepcounter{equation}%
	\setcounter{parentequation}{\value{equation}}
	\setcounter{equation}{0}
	\def\theequation{\theparentequation\alph{equation}}%
	\let\parentlabel\label
	\ifx\\#1\\\relax\else\label{#1}\fi
	\ignorespaces
}{%
	\setcounter{equation}{\value{parentequation}}
	\ignorespacesafterend
}
\newcommand*{\nextParentEquation}[1][]{
	\refstepcounter{parentequation}
	\setcounter{equation}{0}
	\ifx\\#1\\\relax\else\parentlabel{#1}\fi
}
\newcommand{\RCO}{R_{\op{CO}}}
\newcommand{\tRCO}{\tilde{R}_{\op{CO}}}
\newcommand{\RS}{R_{\op{S}}}
\newcommand{\CS}{C_{\op{S}}}
\newcommand{\tCS}{\tilde{C}_{\opS}}
\title{Compressed Secret Key Agreement}
\author{Chung Chan}
\institute{Institute of Network Coding,\\ The Chinese University of Hong Kong, Hong Kong.}
\begin{document}
	\maketitle
	\begin{abstract}
The multiterminal secret key agreement problem by public discussion is formulated with an additional source compression step where, prior to the public discussion phase, users independently compress their private sources to filter out strongly correlated components for generating a common secret key. The objective is to maximize the achievable key rate as a function of the joint entropy of the compressed sources. Since the maximum achievable key rate captures the total amount of information mutual to the compressed sources, an optimal compression scheme essentially maximizes the multivariate mutual information per bit of randomness of the private sources, and can therefore be viewed more generally as a dimension reduction technique. Single-letter lower and upper bounds on the maximum achievable key rate are derived for the general source model, and an explicit polynomial-time computable formula is obtained for the pairwise independent network model. In particular, the converse results and the upper bounds are obtained from those of the related secret key agreement problem with rate-limited discussion. A precise duality is shown for the two-user case with one-way discussion, and such duality is extended to obtain the desired converse results in the multi-user case. In addition to posing new challenges in information processing and dimension reduction, the compressed secret key agreement problem helps shed new light on resolving the difficult problem of secret key agreement with rate-limited discussion, by offering a more structured achieving scheme and some simpler conjectures to prove.
\keywords{secret key agreement; source compression; rate-limited discussion; communication complexity; dimension reduction; multivariate mutual information}
	\end{abstract}

\section{Introduction}
\label{sec:introduction}


In Information-theoretic security, the secret key agreement problem by public discussion is the
problem where 
a group of users discuss in public to generate a common secret key that is
independent of their discussion. 
The problem was first formulated by Maurer~\cite{maurer93}, Ahlswede
and Csisz\'ar~\cite{ahlswede93} under a private source model involving two users who observe
some correlated private sources. Rather surprisingly, public
discussion was shown to be useful in generating the secret key, i.e., it strictly increases the
maximum achievable key rate called the \emph{secrecy capacity}. Such phenomenon was also discovered in \cite{bennett1988privacy} in a different formulation. Furthermore, the secrecy capacity was given an information-theoretically appealing characterization--- it is equal to Shannon's mutual information~\cite{shannon48} between the two private sources, assuming the wiretapper can listen to the entire public discussion but not observe any other side information of the private sources. It was also shown that the capacity can be achieved by one-way public discussion, i.e., with only one of the users discusses in public. 

As a simple illustration, let $\RX_0$, $\RX_1$ and $\RJ$ be three uniformly random independent bits, and suppose user~$1$ observes $\RZ_1:=(\RX_0,\RX_1)$ privately while user~$2$ observes $\RZ_2:=(\RX_{\RJ},\RJ)$, where $\RX_{\RJ}=\RX_0$ when $\RJ=0$ but $\RX_{\RJ}=\RX_1$ when $\RJ=1$.
If user~$2$ reveals $\RJ$ in public, then user~$1$ can recover $\RX_{\RJ}$ and therefore $\RZ_2$. Furthermore, since $\RX_{\RJ}$ is independent of $\RJ$, it can serve as a secret key bit that is recoverable by both users but remains perfectly secret to a wiretapper who observes only the public message $\RJ$. This scheme achieves the secrecy capacity equal to the mutual information $I(\RZ_1\wedge \RZ_2)=1$ roughly because user~$2$ reveals $H(\RZ_2|\RZ_1)=1$~bit in public so there is $H(\RZ_2)-H(\RZ_2|\RZ_1)=I(\RZ_1\wedge \RZ_2)$~bits of randomness left for the secret key. However, if no public discussion is allowed, it follows from the work of G\'ac and K\"orner~\cite{gac72} that no common secret key bit can be extracted from the sources. In particular, $\RX_{\RJ}$ cannot be used as a secret key because user~$1$ does not know whether $\RX_{\RJ}$ is $\RX_0$ or $\RX_1$. $\RX_0$ and also $\RX_1$ cannot be used as a secret key either because they may not be observed by user~$2$ when $\RJ=1$ and $\RJ=0$ respectively. It can be seen that, while the private sources are clearly statistical dependent, public discussion is needed to consolidate the mutual information of the sources into a common secret key.

The secret key agreement formulation was subsequently extended to the multi-user case by Csisz\'ar and Narayan~\cite{csiszar04}. Some users are also allowed to act as \emph{helpers} who can participate in the public discussion but need not share the secret key. The designated set of users who need to share the secret key are referred to as the \emph{active users}. Different from the two-user case, one-way discussion may not achieve the secrecy capacity when there are more than two users. Instead, an \emph{omniscience strategy} was considered in \cite{csiszar04} where the users first communicate minimally in public until omniscience, i.e., the users discuss in public at the smallest total rate until every active user can recover all the private sources. The scheme was shown to achieve the secrecy capacity in the case when the wiretapper only listens to the public discussion. This assumes, however, that the public discussion is lossless and unlimited in rate, and the sources take values from finite alphabet sets. If the sources were continuous or if the public discussion were limited to a certain rate, it may be impossible to attain omniscience. 

This work is motivated by the search of a better alternative to the omniscience strategy for multiterminal secret key agreement. A prior work of Csisz\'ar and Narayan~\cite{csiszar00} considered secret key agreement under rate-limited public discussion. The model involves two
users and a helper observing correlated discrete memoryless sources. The public discussion by the users is conducted in a particular order and direction. While the region of achievable secret key rate and discussion rates remains unknown, single-letter characterizations involving two auxiliary random variables were given for many special cases, including the two-user case with two rounds of interactive public discussion, where each user speaks once in sequence, with the last public message possibly depending on the first. By further restricting to one-way public discussion, the characterization involves only one auxiliary random variable and was extended to continuous sources by Watanabe and Oohama in~\cite{watanabe10}, where they also gave an explicit characterization without any auxiliary random variable for scalar Gaussian sources in \cite{watanabe10}. For vector Gaussian sources, the characterization by the same authors in \cite{watanabe11} involving some matrix optimization was further improved in \cite{liu16} to a more explicit formula. However, if the discussion is allowed to be two-way and interactive, Tyagi~\cite{tyagi13} showed with a concrete two-user example that the minimum total discussion rate required, called the \emph{communication complexity}, can be strictly reduced. Using the technique of Kaspi~\cite{kaspi85}, multi-letter characterizations were given in \cite{tyagi13} for the communication complexity and, similarly, by Liu et al.\ in \cite{LCV16} for the region of achievable secret key rate. \cite{LCV16} further simplified the characterization using the idea of convex envelope using the technique by Ma et al~\cite{ma12}. While these characterizations provide many new insights and properties, they are not considered computable, compared to the usual single-letter and explicit characterizations. Further extension to the multi-user case also appears difficult, as the converse can be seen to rely on the Csisz\'ar sum identity~\cite[Lemma~4.1]{ahlswede93}, which does not appear to extend beyond the two-user case.

Nevertheless, partial solutions under more restrictive public discussion constraints were possible. By simplifying the problem to the right extent, new results were discovered in the multi-user case, which has led to the formulation in this work. For instance, Gohari and Anantharam~\cite{amin10a} characterized the secrecy capacity in the multi-user case under the simpler vocality constraint where some users have to remain silent throughout the public discussion. Using this result, simple necessary and sufficient conditions can be derived as to whether a user can remain silent without diminishing the maximum achievable key rate~\cite{mukherjee14,zhang15,chan16so}. This is a simpler result than characterizing the achievable rate region because it does not say how much discussion is required if a user must discuss. Another line of work~\cite{courtade16,mukherjee16,MKS16,chan16itw} follows \cite{tyagi13} to characterize the communication complexity but in the multi-user case. Courtade and Halford~\cite{courtade16} characterized the communication complexity under a special non-asymptotic hypergraphical source model with linear discussion. 
\cite{MKS16} obtained a multi-letter lower bound on the communication complexity for the asymptotic general source model. It also gave a precise and simple condition under which the omniscience strategy for secret key agreement is optimal for a special source model called the \emph{pairwise independent network (PIN)}~\cite{nitinawarat10}, which is a special hypergraphical source model~\cite{chan10md}. \cite{chan16itw,chan17oo} further derived some single-letter and more easily computable explicit lower bounds, from which one can also obtain conditions for the omniscience strategy to be optimal under the hypergraphical source model, which covers the PIN model as a special case. \cite{chan17cska} considered the more general problem of characterizing the multiterminal secrecy capacity under rate-limited public discussion. In particular, an objective of \cite{chan17cska} is to characterize the \emph{constrained secrecy capacity} defined as the maximum achievable key rate as a function of the total discussion rate. This covers the communication complexity as a special case when further increase in the public discussion rate does not increase the secrecy capacity. While only single-letter bounds were derived for the general source model, a surprisingly simple explicit formula was derived for the PIN model~\cite{chan17cska}. The optimal scheme in \cite{chan17cska} follows the tree-packing protocol in \cite{nitinawarat-ye10}. It turns out to belong to the more general approach of decremental secret key
agreement in~\cite{chan16isit,chan17idska} inspired by the achieving scheme in \cite{courtade16} and the notion of excess edge in \cite{chan10md}. More precisely, the omniscience strategy is applied after some excess or less useful edge random
variables are removed (decremented) from the source. Since the entropy of the decremented source is smaller, the discussion required to attain omniscience of the decremented source is also smaller. Such decremental secret key agreement approach applies to hypergraphical sources more generally, and it results in one of the best upper bounds in~\cite{mukherjee16} for communication complexity. However, for more general source models that are not necessarily hypergraphical, the approach does not directly apply.

The objective of this work is to formalize and extend the idea of decremental secret key agreement beyond the hypergraphical source model. More precisely, the secret key agreement problem is considered with an additional source compression step before public discussion
where each user independently compresses their private source component to filter away less correlated randomness that does not contribute much to the achievable secret key rate.
The compression is such that the entropy rate of the compressed sources is reduced to under certain specified level. In particular, the edge removal process in decremental secret key agreement can be viewed as a special case of source compression, and the more general problem will be referred to as \emph{compressed secrecy key agreement}. The objective is to characterize the
achievable secret key rate maximized over all valid
compression schemes. For simplicity, this work will focus on the case without helpers, i.e., when all users
are active and want to share a common secret key. A closely related formulation is by Nitinawarat and Narayan~\cite{nitinawarat12}, which characterized the maximum achievable key rate for the two-user case under the scalar gaussian source model where one of the user is required to quantize the source to within a given rate. \cite{vatedka16} also extended the formulation and techniques in \cite{nitinawarat12} to the multi-user case where every user can quantize their sources individually to a certain rate. The compression considered in this work is more general than quantizations for gaussian sources, and the new results are meaningful beyond continuous sources.

The compressed secret key agreement problem is also motivated by the study of multivariate mutual information (MMI)~\cite{chan15mi}, i.e., an
extension of Shannon's mutual information to the multivariate case involving possibly more than
two random variables. The unconstrained secrecy capacity in the no-helper case has been viewed as a measure of mutual information in
\cite{chan2008tightness,chan15mi}, not only because of its mathematically appealing interpretations such as the residual independence relation and data processing inequalities in \cite{chan15mi}, but also because of its operational significance in undirected network coding~\cite{chan11isit,chan12ud}, data clustering~\cite{chan16cluster} and feature selection~\cite{chan16allerton} (cf.\ \cite{csiszar2008axiomatic}). The optimal source compression scheme that achieves the compressed secrecy capacity can be viewed more generally as an optimal dimension reduction procedure that maximizes the MMI per bit of randomness, which is an extension of the information bottleneck problem~\cite{tishby00} to the multivariate case. However, different from the multivariate extension in~\cite{friedman01}, the MMI is used instead of Watanabe's total correlation~\cite{watanabe60}, and so it captures only the information mutual to all the random variables rather than the information mutual to any subsets of the random variables. Furthermore, the compression is on each random variable rather than subsets of random variables.


The paper is organized as follows. The problem of compressed secret key agreement is formulated in Section~\ref{sec:problem}. Preliminary results of secret key agreement are given in Section~\ref{sec:prelim}. The main results are motivated in Section~\ref{sec:ML} and presented in Section~\ref{sec:results}, followed by the conclusion and some discussions on potential extensions in Section~\ref{sec:conclusion}.

\section{Problem Formulation}
\label{sec:problem}

Similar to the multiterminal secret key agreement problem~\cite{csiszar04} without helpers or
wiretapper's side information, the setting of the problem involves a finite set $V$ of $\abs {V}>1$ users, and a discrete
memoryless multiple source
\begin{align*}
\RZ_V & := (\RZ_i|i\in V)\sim P_{\RZ_V} \text{ taking values from }\\
Z_V & := \prod\nolimits_{i\in V} Z_i \kern1em \text{(not necessarily finite).}
\end{align*}
N.b., letters in sans serif font are used for random variables and the corresponding capital letters in the usual math italic font denote the alphabet sets. $P_{\RZ_V}$ denotes the joint distribution of $\RZ_i$'s. 

A secret key agreement protocol with source compression can be broken into the following phases:
\setlength{\plparsep}{1em}
\begin{compactdesc}
	\item[Private observation:] Each user $i\in V$ observes an $n$-sequence 
	$$
	\RZ_i^n:=(\RZ_{it}|t\in[n])=(\RZ_{i1},\RZ_{i2},\ldots,\RZ_{in})
	$$
	i.i.d. generated from the source $\RZ_i$ for some block length $n$. N.b., for convenience, $[n]$ denotes the set of positive intergers up to $n$, i.e, $\Set {1,\dots,n}$.
	
	\item[Private randomization:] Each user $i\in V$ generates a random variable $\RU_i$ independent of the private source, i.e.,
	\begin{align}
	\label{eq:U}
	H(\RU_V|\RZ_V)=\sum_{i\in V}H(\RU_i).
	\end{align}
	
	\item[Source compression:] 	Each user $i\in V$ computes
	\begin{align}
	\tRZ_i &= `z_i(\RU_i,\RZ_i^n) \label{eq:tZ}
	\end{align}
	for some function $`z_i$ that maps to a finite set. $\tRZ_V$ is referred to as the compressed source. 
	
	\item[Public discussion:]  Using a public authenticated noiseless channel, a user $i_t\in V$ is chosen in round~$t \in [\ell]$ to broadcast a message
	\begin{subequations}
		\label{eq:discussion}
		\begin{alignat}{2}
		\tRF_t &:=\tdf_t(\tRZ_{i_t},\tRF^{t-1})&\kern1em&\text {where} \label{eq:Fit}
		\end{alignat}
		$\ell$ is a positive integer denoting the number of rounds and
		$\tRF^{t-1}$ denotes all the messages broadcast in the previous rounds. If the dependency on $\tRF^{t-1}$ is dropped, the discussion is said to be \emph{non-interactive}. The discussion is said to be one-way (from user $i$) if $\ell=1$ (and $i_1=1$). For convenience,
		\begin{align}
		\RF_i & := (\tRF_t|t\in[\ell],i_t=i)\label{eq:Fi}\\
		\RF & := \tRF^{\ell} = \RF_V \label{eq:F}
		\end{align}
		denote the aggregate message from user $i\in V$ and the aggregation of the messages from all users respectively.
	\end{subequations}
	
	\item[Key generation:] A random variable $\RK$, called the secret key, is required to satisfy the recoverability constraint that
	\begin{equation}
	\lim_{n\to\infty}\text{Pr}(\exists \; i\in V, \RK\neq\theta_i(\tRZ_i,\RF))=0, \label{eq:recover}
	\end{equation}
	for some function $\theta_i$, and the secrecy constraint that
	\begin{equation}
	\lim_{n\to\infty}\frac{1}{n}`1[\log\abs{K}-H(\RK|\RF)`2]=0, \label{eq:secrecy}
	\end{equation}
	where $K$ denotes the finite alphabet set of possible key values.
\end{compactdesc}
N.b., unlike \cite{tyagi13}, non-interactive discussion is considered different from one-way discussion in the two-user case since both users are allowed to discuss even though their messages cannot depend on each other. Different from \cite{csiszar08}, there is an additional source compression phase, after which the protocol can only depend on the origninal sources through the compressed sources. 

The objective is to characterize the maximum achievable secret key rate for a continuum of different levels of source compression:
\begin{definition}
	\label{def:tCS}
	The \emph{compressed secrecy capacity} with a joint entropy limit $`a\geq 0$ is defined as 
	\begin{align}
	\tCS(`a) &:= \sup \liminf_{n\to\infty}\frac{1}{n}\log\abs{K}
	\label{eq:tCS}
	\end{align}
	where the supremum is over all possible compressed secret key agreement schemes satisfying
	\begin{align}
	\limsup_{n\to `8} \frac1n H(\tRZ_V) - `a\leq 0. \label{eq:`a}
	\end{align}
	This constraint limits the joint entropy rate of the compressed source.
\end{definition}
N.b., instead of the joint entropy limit, one may also consider entropy limits on some subset $B \subseteq V$ that
\begin{align}
\limsup_{n\to\infty} \frac1n H(\tRZ_B) - `a \leq 0. \label{eq:`a_B}
\end{align}
If multiple entropy limits are imposed, $\tCS$ will be a higher-dimensional surface instead of a one-dimensional curve. 
For example, in the two-user case under the scalar gaussian source model, \cite{nitinawarat12} considered the entropy limit only on one of the users.
In the multi-user case under the gaussian markov tree model, \cite{vatedka16} considered the symmetric case where the entropy limit is imposed on every user.

For simplicity, however, the joint entropy constraint~\eqref{eq:`a} will be the primary focus in this work. It will be shown that
$\tCS(`a)$ is closely related to the \emph{constrained secrecy capacity} $\CS(R)$ defined as~\cite{chan17cska}
\begin{align}
\CS(R) &:= \sup \liminf_{n\to\infty}\frac{1}{n}\log\abs{K} \kern1em\text {for $R\geq 0$,}\label{eq:CSR}
\end{align}
with $\tRZ_i := (\RU_i,\RZ_i^n)$ instead of \eqref{eq:tZ}, i.e., without compression, and the entropy limit~\eqref{eq:`a} replaced by the constraint on the total discussion rate
\begin{align}
R \geq \limsup_{n\to\infty} \frac{1}{n}\log\abs{F}=\limsup_{n\to\infty} \frac{1}{n}\sum_{i \in V}\log\abs{F_i}. \label{eq:R}
\end{align}
N.b., it follows directly from the result of \cite{csiszar04} that $\tCS(`a)$ remains unchanged whether the discussion is interactive or not. Indeed, the relation between $\tCS(`a)$ and $\CS(R)$ to be shown in this work will not be affected either. Therefore, for notational simplicity, $\CS(R)$ may refer to the case with or without interaction, even though $\CS (R)$ may be smaller with non-interactive discussion.

It is easy to show that $\CS(R)$ is continuous, non-decreasing and concave in $R$~\cite[Proposition~3.1]{chan17cska}. As $R$ goes to $`8$, the secrecy capacity
\begin{align}
\CS(`8) &:=\liminf_{R\to\infty}\CS(R) \label{eq:C`8}
\end{align}
is the usual \emph{unconstrained secrecy capacity} defined in~\cite{csiszar04} without the discussion rate constraint~\eqref{eq:R}. The smallest discussion rate that achieves the unconstrained secrecy capacity is the \emph{communication complexity} denoted by
\begin{align}
\RS:=\inf \Set {R\geq 0\mid \CS(R)=\CS(`8)}.\label{eq:RS}
\end{align}

Similar to $\CS(R)$, the following basic properties can be shown for $\tCS(`a)$:
\begin{proposition}
	\label{pro:basic}
	$\tCS(`a)$ is continuous, non-decreasing and concave in $`a\geq 0$. Furthermore,
	\begin{align}
	\CS(`8)=\liminf_{`a\to\infty}\tCS(`a), \label{eq:C`8:`a}
	\end{align}
	achieving the unconstrained secrecy capacity in the limit.
\end{proposition}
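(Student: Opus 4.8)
The plan is to dispatch the four claims --- monotonicity, concavity, continuity, and the limiting identity --- in that order. Monotonicity is immediate: for $0\le\alpha\le\alpha'$ every compression scheme meeting the joint-entropy limit at level $\alpha$ also meets it at level $\alpha'$, so $\tCS(\alpha')$ is a supremum over a larger feasible set. I would next record the a priori bound $\tCS(\alpha)\le\alpha$, which yields both finiteness and $\tCS(0)=0$: in any feasible scheme, Fano's inequality applied to the recoverability constraint \eqref{eq:recover} gives $H(\RK\mid\tRZ_1,\RF)\le 1+\epsilon_n\log\abs{K}$ with $\epsilon_n\to0$, so by the secrecy constraint \eqref{eq:secrecy},
\begin{align*}
\log\abs{K} &\le H(\RK\mid\RF)+o(n) \le H(\tRZ_1\mid\RF)+H(\RK\mid\tRZ_1,\RF)+o(n)\\
&\le H(\tRZ_V)+\epsilon_n\log\abs{K}+o(n);
\end{align*}
dividing by $n$ and taking limits against the entropy-rate constraint gives $\liminf_n\tfrac1n\log\abs{K}\le\alpha$, and the matching $\tCS(0)\ge0$ is the degenerate constant scheme.

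For concavity I would use a block-splitting (time-sharing) construction. Fix $\alpha_1,\alpha_2\ge0$, $\lambda\in(0,1)$ and $\epsilon>0$, pick feasible schemes at levels $\alpha_1,\alpha_2$ with key rates above $\tCS(\alpha_1)-\epsilon$ and $\tCS(\alpha_2)-\epsilon$, and for block length $n$ run them independently in parallel on the first $\floor{\lambda n}$ and the last $n-\floor{\lambda n}$ source symbols, with independently drawn private randomizations: set $\tRZ_i:=(\tRZ_i^{(1)},\tRZ_i^{(2)})$, let the discussion be the concatenation of the two transcripts (still of admissible form, since each new message is a function of the speaker's compressed source and the accumulated transcript), put $\RK:=(\RK^{(1)},\RK^{(2)})$, and let user $i$ recover $\RK^{(j)}$ from $(\tRZ_i^{(j)},\RF^{(j)})$. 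Because the two symbol blocks and the two private randomizations are mutually independent, the error probabilities add, both $\log\abs{K}-H(\RK\mid\RF)$ and $H(\RK\mid\RF)$ split additively over the blocks, $H(\tRZ_V)=H(\tRZ_V^{(1)})+H(\tRZ_V^{(2)})$, and the independence constraint \eqref{eq:U} is inherited by $\RU_i=(\RU_i^{(1)},\RU_i^{(2)})$; since the block-length fractions tend to $\lambda$ and $1-\lambda$, the combined scheme is feasible at level $\lambda\alpha_1+(1-\lambda)\alpha_2$ and has key rate at least $\lambda\tCS(\alpha_1)+(1-\lambda)\tCS(\alpha_2)-\epsilon$. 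Letting $\epsilon\downarrow0$ gives concavity.

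Continuity then comes for free: a finite concave function is continuous on $(0,\infty)$, and at $\alpha=0$ the sandwich $0=\tCS(0)\le\tCS(\alpha)\le\alpha\to0$ gives right-continuity. For the limiting identity, monotonicity gives $\liminf_{\alpha\to\infty}\tCS(\alpha)=\sup_{\alpha\ge0}\tCS(\alpha)$, so I must show this supremum equals $\CS(\infty)$. The inequality $\tCS(\alpha)\le\CS(\infty)$ holds for all $\alpha$ because a compressed scheme is in particular an unconstrained secret key agreement protocol in the sense of \cite{csiszar04}: the compression map \eqref{eq:tZ} makes every message a function of the speaker's private observation, private randomness and past transcript, and \eqref{eq:recover}--\eqref{eq:secrecy} are exactly its recoverability and secrecy requirements. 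For the reverse inequality, fix $r<\CS(\infty)$; using the standard fact that the unconstrained secrecy capacity of a general source is the supremum of those of its finite quantizations, choose finite quantizers $q_i$ of $\RZ_i$ with the unconstrained secrecy capacity of $(q_i(\RZ_i))_{i\in V}$ exceeding $r$, set $\alpha_0:=H\big((q_i(\RZ_i))_{i\in V}\big)<\infty$, and take $\tRZ_i:=(q_i(\RZ_{it}))_{t\in[n]}$ together with any bounded private randomness the protocol needs; this is a feasible compressed scheme at level $\alpha_0$ whose key rate equals that of an optimal protocol for the quantized source, so $\tCS(\alpha_0)\ge r$. Letting $r\uparrow\CS(\infty)$ completes it.

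The hard part is the reverse inequality of the limiting identity for non-discrete sources: one cannot simply take $\tRZ_i=\RZ_i^n$ when $H(\RZ_i)=\infty$, and for interactive protocols storing an entire discussion strategy inside $\tRZ_i$ is doubly exponential in $n$, so some finitary reduction is unavoidable. The quantization argument above resolves it, but it rests on the (standard, though worth spelling out) approximability of $\CS(\infty)$ through finite-alphabet sources and on checking that the induced scheme is admissible --- finite-valued compression map, preserved independence of the private randomization, and exact control of $\tfrac1n H(\tRZ_V)$. Everything else --- monotonicity, concavity, continuity --- is routine.
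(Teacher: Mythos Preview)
Your proof is correct and, for concavity, uses exactly the block-splitting time-sharing argument the paper uses. Where you differ from the paper is in the level of detail: the paper's proof simply asserts that ``continuity, monotonicity and \eqref{eq:C`8:`a} follow directly from the definition'' and spells out only the time-sharing step. You instead (i) derive continuity from concavity plus the sandwich $0\le\tCS(\alpha)\le\alpha$, (ii) prove the bound $\tCS(\alpha)\le\alpha$ via Fano and the secrecy constraint --- a result the paper postpones to Proposition~\ref{pro:GK} and proves by appealing to \eqref{eq:CSRCO} --- and (iii) handle the reverse inequality in \eqref{eq:C`8:`a} for possibly non-discrete sources via a quantization argument, correctly identifying that one cannot simply set $\tRZ_i=\RZ_i^n$ when $H(\RZ_i)=\infty$. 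Your route is more self-contained and more careful about the continuous-source case; the paper's route is terser and defers the entropy bound to a later proposition. Both are valid.
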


\begin{proof}
	Continuity, monotonicity and \eqref{eq:C`8:`a} follow directly from the definition of $\tCS(`a)$. Concavity follows from the usual time-sharing argument, i.e., for any $`l\in [0,1]$, $`a',`a''>0$, a secret key rate of $`l\tCS(`a')+(1-`l)\tCS(1-`a')$ is achievable with the entropy limit $`a:=`l`a'+(1-`l)`a''$ by applying the optimal scheme that achieves $\tCS(`a')$ for the first $n':=\lfloor `ln\rfloor$ samples of $\RZ_V^n$ and applying the optimal scheme that achieves $\tCS(`a'')$ for the remaining $n'':=n-n'$ samples.
\end{proof}

Because of \eqref{eq:C`8:`a}, a quantity playing the same role of $\RS$ for $\CS$ can be defined for $\tCS(`a)$ as follows.

\begin{definition}
	\label{def:`aS}
	The smallest entropy limit that achieves the unconstrained secrecy capacity is defined as
	\begin{align}
	`a_{\opS} := \inf \Set {`a \mid \tCS(`a)=\CS(`8)} \label{eq:`aS}
	\end{align}
	and referred to as the \emph{minimum admissible joint entropy}.
\end{definition}
One may also consider both the entropy limit~\eqref{eq:`a} and discussion rate constraint~\eqref{eq:R} simultaneously, and define the secrecy capacity as a function of $`a$ and $R$. For simplicity, however, we will not consider this case but, instead, focus on the relationship between $\tCS(`a)$ and $\CS(R)$.

The following example illustrates the problem formulation. It will be revisited at the end of Section~\ref{sec:results} (Example~\ref{eg:mot:2}) to illustrate the main results.

\begin{example}
	\label{eg:mot}
	Consider $V:=\Set {1,2,3}$ and
	\begin{align}
	\RZ_1:=(\RX_a,\RX_b),\kern1em \RZ_2:=(\RX_a,\RX_b,\RX_c),\kern1em \text {and}\kern1em \RZ_3:=(\RX_a,\RX_c),\label{eq:mot}
	\end{align}
	where $\RX_a, \RX_b$ and $\RX_c$ are uniformly random and independent bits. It is easy to argue that
	\begin{subequations}
		\begin{align}
		\tCS(`a)\geq `a \kern1em \text {for $`a\in [0,1]$}.
		\label{eq:mot:tCS1}
		\end{align}
		To see this, notice that $\RX_a$ is observed by every user. Any choice of $\RK=`q(\RX_a^n)$ can therefore be recovered by every user without any discussion, satisfying the recoverability constraint~\eqref{eq:recover} trivially. Since there is no public discussion required, the secrecy constraint~\eqref{eq:secrecy} also holds immediately by taking a portion of the bits from $\RX_a^n$ to be the key bits in $\RK$. Finally, setting $\tRZ_i=`z_i(\RX_a^n)=`q(\RX_a^n)$ for all $i\in V$ ensures $H(\tRZ_V)\leq H(\RK)$, satisfying the entropy limit~\eqref{eq:`a} with $`a$ equal to the key rate. Hence, $\tCS(`a)\geq `a$ as desired. Indeed, we will show (by Proposition~\ref{pro:GK}) that the reverse inequality holds in general, and so we have equality for $`a\in [0,1]$ for this example. 
		
		For $`a=H(\RZ_V)=H(\RX_a,\RX_b,\RX_c)=3$, every user can simply retain
		their source without compression, i.e., with $\tRZ_i=\RZ_i$ for $i\in V$ while satisfying the entropy limit~\eqref{eq:`a}. Now, with $\RK=(\RX_a^n,\RX_b^n)$ and $\RF=\RF_2=\RX_b^n\oplus \RX_c^n$ where $\oplus$ is the elementwise XOR, it can be shown that both the recoverability~\eqref{eq:recover} and secrecy~\eqref{eq:secrecy} constraints hold.
		This is because user $3$ can recover $\RX_b$ from the XOR $\RX_b\oplus \RX_c$ with the side information $\RX_c$. Furthermore, the XOR bit is independent of $(\RX_a,\RX_b)$ and therefore does not leak any information about the key bits. With this scheme, $\tCS(3)\geq 2$. By the usual time-sharing argument,
		\begin{align}
		\tCS(`a) &\geq
		\begin{cases}
		\frac{1+`a}2 & \text {for $`a\in [1,3]$}\\
		2 & \text {for $`a\geq 3$.}
		\end{cases}\label{eq:mot:tCS2}
		\end{align}
		Indeed, the reverse inequality can be argued using one of the main results (Theorem~\ref{thm:duality}) and so the minimum admissible joint entropy will turn out to be $`a_{\opS}=3$.
	\end{subequations}
\end{example}



\section{Preliminaries}
\label{sec:prelim}

In this section, a brief summary of related results for the secrecy capacity and communication complexity will be given. The results for the two-user case will be introduced first, followed by the more general results for the multi-user case, and the stronger results for the special hypergraphical source model. An example will also be given at the end to illustrate some of the results.

\subsection{Two-user case}
\label{sec:2user}

As mentioned in the introduction, no single-letter characterization is known for $\CS(R)$ and $\tCS(`a)$ even in the two-user case where $V:=\Set {1,2}$. Furthermore, while multi-letter characterizations for $R_{\opS}$ and $\CS(R)$ were given in \cite{tyagi13} and \cite{LCV16} respectively in the two-user case under interactive discussion, no such multi-letter characterization is known for the case with non-interactive discussion. Nevertheless, if one-way discussion from user~$1$ is considered, then the result of \cite[Theorem~2.4]{csiszar00} and its extension~\cite{watanabe10} to continuous sources gave the following characterization of $\CS(R)$:
\begin{subequations}
	\label{eq:CS1}
	\begin{align}
	C_{\op {S,1}}(R) &:= \sup I(\RZ'_1\wedge \RZ_2) \kern1em \text {where}\label{eq:CS1:1}\\
	& I(\RZ'_1\wedge \RZ_1) - I(\RZ'_1\wedge \RZ_2) \leq R \label{eq:CS1:2}\\
	& I(\RZ'_1\wedge \RZ_2|\RZ_1)=0.
	\label{eq:CS1:3}
	\end{align}
\end{subequations}
The last constraint~\eqref{eq:CS1:3} corresponds to the Markov chain $\RZ'_1"-"\RZ_1"-"\RZ_2$ and so the supremum is taken over the choices of the conditional distribution $P_{\RZ'_1|\RZ_1}=P_{\RZ'_1|\RZ_1,\RZ_2}$. Using the double Markov property as in \cite{tyagi13}, it follows that $\CS(0)$ can be characterized more explicitly by the G\'acs--K\"orner common information 
\begin{align}
J_{\op {GK}}(\RZ_1\wedge \RZ_2) := \sup \Set {H(\RU)\mid H(\RU|\RZ_1)=H(\RU|\RZ_2)=0}\label{eq:JGK:2}
\end{align}
where $\RU$ is a discrete random variable. If \eqref{eq:JGK:2} is finite, a unique optimal solution $\RU$ exists and is called the \emph{maximum common function} of $\RZ_1$ and $\RZ_2$ because any common function of $\RZ_1$ and $\RZ_2$ must be a function of $\RU$. The communication complexity also has a more explicit characterization~\cite[(44)]{tyagi13}
\begin{align}
R_{\op {S,1}} &= J_{\opW,1}(\RZ_1\wedge \RZ_2)-I(\RZ_1\wedge \RZ_2) \quad\text {where}\label{eq:RS1}\\ 
&J_{\opW,1}(\RZ_1\wedge \RZ_2) := \inf \Set {H(\RW) \mid H(\RW|\RZ_1)=0, I(\RZ_1\wedge \RZ_2|\RW)=0 }\label{eq:JW1}
\end{align}
and $\RW$ is a discrete random variable.
If $J_{\opW,1}(\RZ_1\wedge \RZ_2)$ is finite, a unique optimal solution $\RW$ exists and is called the \emph{minimum sufficient statistics} of $\RZ_1$ for $\RZ_2$ since $\RZ_2$ can only depend on $\RZ_1$ through $\RW$.

In Section~\ref{sec:ML}, the expression $C_{\op {S,1}}(R)$ will be related to the compressed secret key agreement restricted to the two-user case when the entropy limit is imposed only on user~$1$. This duality relationship in the two-user case will serve as the motivation of the main results for the multi-user case. 
Indeed, the desired characterization of $\tCS(`a)$ for the two-user case has appeared in \cite[Lemma~4.1]{nitinawarat12} for the scalar gaussian source model:
\begin{subequations}
	\label{eq:tCS1}
	\begin{align}
	\tdC_{\op {S,1}}(`a) &:= \sup I(\RZ'_1\wedge \RZ_2) \kern1em \text {where}\label{eq:CS1:1}\\
	& I(\RZ'_1\wedge \RZ_1)  \leq `a \label{eq:tCS1:2}\\
	& I(\RZ'_1\wedge \RZ_2|\RZ_1)=0.
	\label{eq:tCS1:3}
	\end{align}
\end{subequations}
For the general source model, the expression \eqref{eq:tCS1} has also appeared before with other information-theoretic interpretations as mentioned in~\cite{erkip98}. The lagrangian dual of \eqref{eq:tCS1}, in particular, reduces to the dimension reduction technique called the information bottleneck method in~\cite{tishby00}, where $\RZ_1$ is an observable used to predict the target $\RZ_2$, and $\RZ'_1$ is a feature of $\RZ_1$ that captures as much mutual information with the target variable as possible per bit of mutual information with the observable. Interestingly, the principal of the information bottleneck method was also proposed in ~\cite{tishby15,shwartz-ziv17} as a way to understand   deep learning, since the best prediction of $\RZ_2$ from $\RZ_1$ is nothing but a particular feature of $\RZ_1$ sharing a lot of mutual information with $\RZ_2$. 

\subsection{General source with finite alphabet set}

Consider the multi-user case where $\abs {V}\geq 2$. If $\RZ_V$ takes values from a finite set, then the unconstrained secrecy capacity was shown in~\cite{csiszar04} to be achievable via \emph{communication for omniscience (CO)} and equal to 
\begin{align}
\CS(`8) &= H(\RZ_V) - \RCO, \label{eq:CSRCO}
\end{align}
where $\RCO$ is the smallest rate of CO~\cite{csiszar04} characterized by the linear program
\begin{subequations}
	\label{eq:RCO}
	\begin{align}
	\RCO &= \min_{r_V} r(V)\kern1em \text {such that}\\
	r(B) &\geq H(\RZ_B|\RZ_{V`/B})\kern1em \forall B\subsetneq V,
	\end{align}
\end{subequations}
where $r(B)$ denotes the sum $\sum_{i \in B} r_i$. Further, $R_{\op {CO}}$ can be achieved by non-interactive discussion.
It follows that
\begin{subequations}
	\begin{alignat}{2}
	\RS&\leq \RCO, &\kern1em& \text {or equivalently}\\
	\CS(R)&=\CS(`8) && \text{ $R\geq \RCO$.}
	\label{eq:CSR:RCO}
	\end{alignat}
\end{subequations} 
It was also pointed out in~\cite{csiszar04} that private randomization does not increase $\CS(`8)$. Hence, if $Z_V$ is finite, we have 
\begin{align}
`a_{\opS}\leq H(\RZ_V) \label{eq:`aS:UB}
\end{align}
because $\CS(`8)$ can be achieved with $\tRZ_i=\RZ_i$.
While it seems plausible that randomization does not decrease $\RS$ nor increase $\CS(R)$ for any $R\geq 0$, a rigorous proof remains elusive. Similarly, it appears plausible that neither $`a_{\opS}$ nor $\tCS(`a)$ are affected by randomization but, again, no proof is known yet. 

An alternative characterization of $\CS(`8)$ was established in \cite{chan2008tightness,chan10md} by showing that the divergence bound in \cite{csiszar04} is tight in the case without helpers. More precisely, with $\Pi'(V)$ defined as the set of partitions of $V$ into at least two non-empty disjoint sets, then
\begin{subequations}
	\label{eq:mmi}
	\begin{align}
	\CS(`8) = I(\RZ_V)&:=\min_{\mcP\in \Pi'(V)} I_{\mcP}(\RZ_V), \kern1em \text{where} \label{eq:I}\\
	I_{\mcP}(\RZ_V)&:=\frac{1}{|\mcP|-1} D`1(\extendvert{P_{\RZ_V}\|\prod_{C\in \mcP} P_{\RZ_C}}`2) \label{eq:IP}\\ &=\frac{1}{|\mcP|-1}`1[\sum\nolimits_{C\in \mcP}H(\RZ_C) - H(\RZ_V) `2].\kern-2em
	\notag
	\end{align}
\end{subequations}
In the bivariate case when $V=\Set{1,2}$, $I(\RZ_V)$ reduces to Shannon's mutual information $I(\RZ_1\wedge \RZ_2)$. It was further pointed out in \cite{chan15mi} that $I(\RZ_V)$ is the minimum solution $`g$ to the \emph{residual independence relation}
\begin{align}
H(\RZ_V)-`g = \sum_{C\in \mcP} `1[H(\RZ_C)-`g`2]
\end{align}
for some $\mcP\in \Pi'(V)$. To get an intuition of the above relation, notice that $`g=0$ is a
solution when the joint entropy $H(\RZ_V)$ on the left is equal to the sum of entropies
$H(\RZ_C)$'s on the right for some partition $\mcP$. In other words, the MMI is the smallest
value of $`g$ removal of which leads to an independence relation, i.e., the total residual
randomness on the left is equal to the sum of individual residual randomness on the right
according to some partitioning of the random variables. It was further shown in \cite{chan15mi}
that there is a unique finest optimal partition to \eqref{eq:I} with a clustering interpretation
in \cite{chan16cluster}. The MMI is also computable in polynomial time, following the result of
Fujishige~\cite{fujishige88}.

In the opposite extreme with $R\to 0$, it is easy to argue that
\begin{align}
\CS(0) \geq J_{\op{GK}}(\RZ_V) \label{eq:CS0}
\end{align}
where $J_{\op {GK}}(\RZ_V)$ is the multivariate extension of the G\'acs--K\"orner common information in \eqref{eq:JGK:2}
\begin{align}
J_{\op {GK}}(\RZ_V) := \sup \Set {H(\RU) \mid H(\RU|\RZ_i)=0\;\forall i\in V}
\label{eq:JGK}
\end{align}
with $\RU$ again chosen as a discrete random variable.
Note that, even without any public discussion, every user can compress their source independently to $\RU^n$ where $\RU$ is the maximum common function if $J_{\op {GK}}(\RZ_V)$ is finite. Hence, it is easy to achieve a secret key rate of $H(\RU)=J_{\op {GK}}(\RZ_V)$ without any discussion. The reverse inequality of \eqref{eq:CS0} seems plausible but has not been proven yet except in the two-user case. The technique in \cite{csiszar00} which relies on the Csisz\'ar sum identity does not appear to extend to the multi-user case to give a matching converse.

\subsection{Hypergraphical sources}
\label{sec:hyp}

Stronger results have been derived for the following special source model:
\begin{definition}[Definition~2.4 of \cite{chan10md}]\label{def:hyp}
	$\RZ_V$ is a \emph{hypergraphical source} w.r.t.\  a hypergraph $(V,E,`x)$ with edge functions $`x: E\to2^V`/\{\emptyset\}$ iff, for some independent edge variables $\RX_e$ for $e\in E$ with $H(\RX_e)>0$,
	\begin{equation}
	\RZ_i:=(\RX_e\mid  e\in E, i\in`x(e))\kern1em \text{ for } i\in V. \label{eq:Xe}
	\end{equation}
	In the special case when the hypergraph is a graph, i.e., $\abs {`x(e)}=2$, the model reduces to the pairwise independent network (PIN) model in \cite{nitinawarat10}. The hypergrahical source can also be viewed as a special case of the finite linear source considered in~\cite{chan10phd} if the edge random variables take values from a finite field.
\end{definition}

For hypergraphical sources, various bounds on $\RS$ and $\CS(R)$ have been derived in \cite{mukherjee16,MKS16,chan16itw,chan17cska}. The achieving scheme makes use of the idea of decremental secret key agreement~\cite{chan16isit,chan17idska}, where the redundant or less useful edge variables are removed or reduced before public discussion. This is a special case of the compressed secret key agreement, where the compression step simply selects the more useful edge variables up to the joint entropy limit.

For the PIN model, it turns out that decremental secret key agreement is optimal, leading to a single-letter characterization of $\RS$ and $\CS(R)$ in \cite{chan17cska}:
\begin{subequations}
	\label{eq:PIN:CSR,RS}
	\begin{align}
	\RS &= (\abs {V}-2) \CS(`8).\label{eq:PIN:RS}\\
	\CS(R) &= \min\Set*{\frac {R}{ \abs {V}-2},\CS(`8)} \quad \text{for $R\geq 0$.} \label{eq:PIN:CSR}
	\end{align}
\end{subequations}
It can be verified that \eqref{eq:PIN:RS} is the smallest value of $R$ such that $\CS(R)=\CS(`8)$ using \eqref{eq:PIN:CSR}. While the proof of converse, i.e., $\leq$ for \eqref{eq:PIN:CSR}, is rather involved, the achievability is by a simple tree packing protocol, which belongs to the decremental secret key agreement approach that removes excess edges unused for the maximum tree packing. In other words, the achieving scheme is a compressed secret key agreement scheme. This connection will lead to a single-letter characterization of $\tCS(`a)$ for the PIN model (in Theorem~\ref{thm:PIN}).

To illustrate the above results, a single-letter characterization for $\CS(R)$ will be derived in the following for the source in Example~\ref{eg:mot}. It will also demonstrate how an exact characterization for $\CS(R)$ can be extended from a PIN model to a hypergraphical model via some contrived arguments. The characterization will also be useful later in Example~\ref{eg:mot:2} to give an exact characterization of $\tCS(`a)$.

\begin{example}
	\label{eg:mot:CSR}
	The source defined in \eqref{eq:mot} in Example~\ref{eg:mot}, for instance, is a hypergraphical source with $E=\Set {a,b,c}$, $`x(a)=\Set {1,2,3}$, $`x(b)=\Set {1,2}$ and $`x(c)=\Set {2,3}$.
	By \eqref{eq:RCO}, we have $\RCO = 1$ with the optimal solution $r_1=r_3=0$ and $r_2=1$. This means that user $2$ needs to discuss $1$ bit to attain omniscience. In particular, user $2$ can reveal the XOR $\RX_b\oplus \RX_c$ so that user $1$ and $3$ can recover $\RX_c$ and $\RX_b$ respectively from their observations. 
	By \eqref{eq:CSR:RCO}, then, we have
	\begin{align}
	\CS(R)=\CS(`8)=H(\RZ_V)-\RCO = 2\kern1em \text{for $R\geq \RCO=1$}.\label{eq:mot:CSR1}
	\end{align}
	It can also be checked that the alternative characterization of $\CS(`8)$ in \eqref{eq:mmi} gives
	\begin{align*}
	\CS(`8)=I(\RZ_V) = \frac12 `1[H(\RZ_1)+H(\RZ_2)+H(\RZ_3)-H(\RZ_{\Set {1,2,3}})`2] = 2.
	\end{align*}
	
	Next, we argue that 	
	\begin{align}
	\CS(R)= 1+R \kern1em \text{for $R\in [0,1]$.} \label{eq:mot:CSR2}
	\end{align}
	The achievability, i.e., the inequality $\CS(R)\geq 1+R$, is by the usual time-sharing argument.
	In particular, the bound $\CS(0.5)\geq 1.5$, for example, can be achieved by the compressed secret key agreement scheme in Example~\ref{eg:mot} with $`a=2$, i.e., by time-sharing the compressed secret key agreement schemes for $`a=1$ and for $`a=3$ equally. More precisely, we set $\tRZ_1=(\RX_a^n,\RX_b^{\lfloor n/2\rfloor})$, $\tRZ_2=(\RX_a^n,\RX_b^{\lfloor n/2\rfloor},\RX_c^{\lfloor n/2\rfloor})$, $\tRZ_3=(\RX_a^n,\RX_c^{\lfloor n/2\rfloor})$, $\RK=(\RX_a^n,\RX_b^{\lfloor n/2\rfloor})$ and $\RF=\RF_2= \RX_b^{\lfloor n/2\rfloor}\oplus \RX_c^{\lfloor n/2\rfloor}$. It follows that the public discussion rate is $\limsup_{n\to\infty}\frac 1n \log\abs {F}=0.5$.
	
	Now, to prove the reverse inequality $\leq$ for~\eqref{eq:mot:CSR2}, we modifies the source $\RZ_V$ to another source $\RZ'_V$ defined as follows with an additional uniformly random and independent bit $\RX_d$:
	\begin{align*}
	\RZ'_1:=(\RX_a,\RX_b),\kern1em \RZ'_2:=(\RX_a,\RX_b,\RX_c,\RX_d),\kern1em \text {and}\kern1em \RZ'_3:=(\RX_c,\RX_d).\label{eq:mot:PIN}
	\end{align*}
	N.b., $\RZ'_V$ is different from $\RZ_V$, namely, $\RZ'_2$ is obtained from $\RZ_2$ by adding
	$\RX_d$, and $\RZ'_3$ is obtained from $\RZ_3$ by adding $\RX_d$ and removing $\RX_a$.
	It follows that $\RZ'_V$ is a PIN. By \eqref{eq:mmi} and \eqref{eq:PIN:CSR}, the constrained secrecy capacity for the modified source $\RZ'_V$ is
	\begin{align*}
	\CS'(R) = \min\Set*{R,2}.
	\end{align*}
	The desired inequality is proved if we can show that
	\begin{align*}
	\CS'(R+1) \geq \CS(R).
	\end{align*}
	To argue this, note that, if user $2$ reveals $\RF_2'=\RX_a\oplus \RX_d$ in public, then user $3$ can recover $\RX_a$. Furthermore, $\RF_2'$ does not leak any information about $\RX_a$, and so the source $\RZ'_V$ effectively emulates the source $\RZ_V$. Consequently, any optimal discussion scheme $\RF_V$ that achieves $\CS(R)$ for $\RZ_V$ can be used to achieve the same secret key rate but after an additional bit of discussion $\RF_2'$. This gives the desired inequality that establishes \eqref{eq:mot:CSR2}.
\end{example}

\section{Multi-letter characterization}
\label{sec:ML}

We start with a simple multi-letter characterization of the compressed secrecy capacity in terms of the MMI~\eqref{eq:mmi}.
\begin{proposition}
	\label{pro:tCS:ML}
	For any $`a\geq 0$, we have
	\begin{align}
	\tCS(`a) &= \sup \lim_{n\to\infty} \frac1n I(\tRZ_V)
	\label{eq:tCS:ML}
	\end{align}
	where the supremum is over all valid compressed source $\tRZ_V$ satisfying the joint entropy limit~\eqref{eq:`a}. 
\end{proposition}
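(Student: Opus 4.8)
The plan is to prove the two inequalities separately, each time reducing to the finite-alphabet characterization $\CS(`8)=H(\RZ_V)-\RCO=I(\RZ_V)$ of \cite{csiszar04,chan10md}, but applied to the compressed source $\tRZ_V$ rather than to $\RZ_V$ itself. Observe first that the private randomization $\RU_V$ is already absorbed into $\tRZ_V$ through \eqref{eq:tZ}, so no separate bookkeeping for it is needed.

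\emph{Converse ($\le$).} Fix any compressed secret key agreement scheme, with compressed source $\tRZ_V$ (depending on $n$) satisfying the entropy limit~\eqref{eq:`a}. The key point is that, after the compression phase, every message $\tRF_t$ is a function of $(\tRZ_{i_t},\tRF^{t-1})$ and the key is recovered from $(\tRZ_i,\RF)$; hence the discussion-and-key-generation part is literally a secret key agreement protocol whose underlying source is the finite-alphabet tuple $\tRZ_V$, no matter how $\tRZ_V$ was produced. I would therefore apply the single-shot form of the Csisz\'ar--Narayan converse underlying \eqref{eq:mmi} to $\tRZ_V$: for every partition $\mcP\in\Pi'(V)$, grouping the recoverability constraint~\eqref{eq:recover} according to the blocks of $\mcP$ and invoking Fano's inequality gives $\log\abs{K}\le I_{\mcP}(\tRZ_V\mid\RF)+o(n)$, where the $o(n)$ term collects the Fano slack from \eqref{eq:recover} and the leakage term from the secrecy constraint~\eqref{eq:secrecy}. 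Since $\RF$ is produced by interactive communication from $\tRZ_V$, a round-by-round estimate --- using that each $\tRF_t$ is a function of $\tRZ_C$ for the block $C\ni i_t$ --- shows the monotonicity $I_{\mcP}(\tRZ_V\mid\RF)\le I_{\mcP}(\tRZ_V)$. Minimizing over $\mcP$ gives $\tfrac1n\log\abs{K}\le\tfrac1n I(\tRZ_V)+o(1)$. Passing to a subsequence along which $\tfrac1n I(\tRZ_V)$ converges, this $\tRZ_V$ is feasible for the supremum on the right of \eqref{eq:tCS:ML}, so $\tCS(`a)$ is at most that supremum.

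\emph{Achievability ($\ge$).} Fix any feasible sequence of compressions $(\tRZ_V)_n$ with $\limsup_n\tfrac1n H(\tRZ_V)\le`a$ and $\lim_n\tfrac1n I(\tRZ_V)=\rho$. For overall block length $N$, I would partition $\RZ_V^N$ into $m\approx N/n$ sub-blocks of length $n$ and compress each with the given $`z_i$, producing $m$ i.i.d.\ copies of the finite-alphabet source $\tRZ_V$; these form the compressed source of the overall scheme, and its entropy rate $\tfrac mN H(\tRZ_V)\le\tfrac1n H(\tRZ_V)$ meets \eqref{eq:`a} once $n\to\infty$. On these $m$ i.i.d.\ copies I would run communication for omniscience followed by privacy amplification, which by \eqref{eq:CSRCO}--\eqref{eq:mmi} is a valid (indeed non-interactive) protocol delivering a key of rate arbitrarily close to $I(\tRZ_V)$ per copy as $m\to\infty$, satisfying \eqref{eq:recover} and \eqref{eq:secrecy}; hence $\tfrac1N\log\abs{K}\ge\tfrac1n I(\tRZ_V)-o(1)$ per original sample. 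A standard diagonalization --- letting $n=n(N)\to\infty$ slowly enough that the omniscience error and rate loss for the source $\tRZ_V$ (which depend on $n$) still vanish --- gives $\liminf_N\tfrac1N\log\abs{K}\ge\rho$, so $\tCS(`a)\ge\rho$, and taking the supremum over feasible sequences completes this direction.

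The Fano/entropy manipulations in the converse and the privacy-amplification bookkeeping in the achievability are routine. The only two steps needing a little care are the interaction-handling monotonicity $I_{\mcP}(\tRZ_V\mid\RF)\le I_{\mcP}(\tRZ_V)$ in the converse and, in the achievability, the two coupled limits: because $\tRZ_V$ varies with $n$ the error and rate-loss of the omniscience/privacy-amplification step are not uniform in the source, so $n(N)$ must be chosen adaptively (for instance, $n(N)$ the largest $n$ for which a prescribed number $m^{\star}(n)$ of sub-blocks fits into $N$). Neither is a serious obstacle, consistent with the fact that the right-hand side of \eqref{eq:tCS:ML} is still only a multi-letter, non-computable expression.
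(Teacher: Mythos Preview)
Your proof is correct and takes the same approach as the paper --- reduce to the MMI characterization \eqref{eq:mmi} of the unconstrained secrecy capacity, applied to the finite-alphabet compressed source $\tRZ_V$. The paper's own proof is a two-line sketch that simply invokes this reduction; you have spelled out the single-shot divergence-bound argument for the converse and the sub-blocking/diagonalization for the achievability, which are precisely the details the paper elides.
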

\begin{proof}
	This is because the compressed secrecy capacity is simply the secret key agreement on a compressed source. Hence, by \eqref{eq:mmi}, the MMI on the compressed source gives the compressed secrecy capacity.\qed
\end{proof}
The characterization in \eqref{eq:tCS:ML} is simpler than the formulation in \eqref{eq:tCS} because it does not involve the random variables $\RF$ and $\RK$, nor the recoverability~\eqref{eq:recover} and secrecy~\eqref{eq:secrecy} constraints. Although such a multi-letter expression is not computable and therefore not accepted as a solution to the problem, it serves as an intermediate step that helps derive further results. More precisely, consider the bivariate case where $V=\Set {1,2}$. Then, \eqref{eq:tCS:ML} becomes
\begin{subequations}
	\label{eq:tCS2:ML}
	\begin{align}
	\tCS(`a) &= \sup \lim_{n\to\infty} \frac1n I(\tRZ_1\wedge \tRZ_2) \kern1em \text {where}\label{eq:tCS2:ML:1}\\
	&\limsup_{n\to `8} \frac1n H(\tRZ_1,\tRZ_2)-`a \leq 0
	\label{eq:tCS2:ML:2}
	\end{align}
	If in addition the joint entropy constraint~\eqref{eq:tCS2:ML:2} is replaced by the entropy constraint on user~$1$ only, i.e.,
	\begin{align}
	\limsup_{n\to `8} \frac1n H(\tRZ_1)-`a \leq 0, \label{eq:tCS2:ML:3}
	\end{align}
\end{subequations}
then $\tCS(`a)$ can be single-letterized by standard techniques as in \cite{csiszar00} to $\tdC_{\op {S},1}(`a)$ defined in \eqref{eq:tCS1}. The following gives a simple upper bound that is tight for sufficiently small $`a$.
\begin{proposition}
	\label{pro:UB1}
	$\tdC_{\op {S},1}(`a)$ defined in \eqref{eq:tCS1} is continuous, non-decreasing and concave in $`a\geq 0$ with
	\begin{align}
	\tdC_{\op {S},1}(`a)\leq `a. \label{eq:UB1}
	\end{align}
	Furthermore, equality holds iff $`a\leq J_{\op{GK}}(\RZ_1\wedge \RZ_2)$.
\end{proposition}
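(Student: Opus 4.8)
The plan is to dispatch the asserted properties in increasing order of difficulty: the bound \eqref{eq:UB1} together with monotonicity and continuity, then concavity, then the ``if'' half of the equality characterization (which becomes cheap once concavity is available), and finally its ``only if'' half, which is the substantive step.

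For the bound, note that \eqref{eq:tCS1:3} is exactly the Markov condition $\RZ'_1-\RZ_1-\RZ_2$, so for every feasible $\RZ'_1$ the data processing inequality gives $I(\RZ'_1\wedge\RZ_2)\le I(\RZ'_1\wedge\RZ_1)\le`a$ by \eqref{eq:tCS1:2}; taking the supremum yields $\tdC_{\op {S},1}(`a)\le`a$. A constant $\RZ'_1$ is always feasible, so $\tdC_{\op {S},1}(`a)\ge0$, and enlarging $`a$ only loosens \eqref{eq:tCS1:2}, so $\tdC_{\op {S},1}$ is non-decreasing. In particular $0\le\tdC_{\op {S},1}(`a)\le`a$, so $\tdC_{\op {S},1}$ is finite-valued and $\tdC_{\op {S},1}(`a)\to0=\tdC_{\op {S},1}(0)$ as $`a\downarrow0$, giving continuity at the origin. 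For concavity I would run a single-letter time-sharing argument: given $\RZ'_1$ (near-)optimal for $`a'$, $\RZ''_1$ (near-)optimal for $`a''$, and $`l\in[0,1]$, let $\RQ\in\Set{1,2}$ be independent of $(\RZ_1,\RZ_2)$ with $\Pr(\RQ=1)=`l$, generate $\RW$ from $(\RZ_1,\RQ)$ through the channel of $\RZ'_1$ when $\RQ=1$ and through that of $\RZ''_1$ when $\RQ=2$, and set $\RZ'''_1:=(\RW,\RQ)$. Then $\RZ'''_1-\RZ_1-\RZ_2$ still holds, while $I(\RZ'''_1\wedge\RZ_1)$ and $I(\RZ'''_1\wedge\RZ_2)$ are the $`l$-mixtures of the corresponding quantities for $\RZ'_1$ and $\RZ''_1$ (because $\RQ$ is independent of $(\RZ_1,\RZ_2)$); feasibility of $\RZ'''_1$ for $`l`a'+(1-`l)`a''$ then gives concavity, and continuity on $(0,\infty)$ follows since a finite concave function is continuous on the interior of its domain.

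For the ``if'' direction, let $\RU$ be any discrete random variable with $H(\RU\mid\RZ_1)=H(\RU\mid\RZ_2)=0$. Taking $\RZ'_1:=\RU$ is feasible for $`a=H(\RU)$ (being a function of $\RZ_1$, it satisfies the Markov condition and $I(\RZ'_1\wedge\RZ_1)=H(\RU)$) and, being a function of $\RZ_2$ as well, it achieves $I(\RZ'_1\wedge\RZ_2)=H(\RU)$; with \eqref{eq:UB1} this forces $\tdC_{\op {S},1}(H(\RU))=H(\RU)$. Since also $\tdC_{\op {S},1}(0)=0$, concavity places the graph of $\tdC_{\op {S},1}$ on or above the chord through $(0,0)$ and $(H(\RU),H(\RU))$ — i.e.\ on or above the line $`a\mapsto`a$ — throughout $[0,H(\RU)]$, while \eqref{eq:UB1} places it on or below; hence $\tdC_{\op {S},1}(`a)=`a$ on $[0,H(\RU)]$. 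Letting $\RU$ range over all common functions (the maximum common function when $\JGK(\RZ_1\wedge\RZ_2)$ is finite, or common functions of arbitrarily large entropy otherwise) gives $\tdC_{\op {S},1}(`a)=`a$ for all $`a\le\JGK(\RZ_1\wedge\RZ_2)$.

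The ``only if'' direction is the crux, and I would argue it by contradiction using the double Markov property. Suppose $\tdC_{\op {S},1}(`a)=`a$ for some $`a>\JGK(\RZ_1\wedge\RZ_2)$. A standard cardinality bound on $\abs{Z'_1}$ turns the optimization over channels $P_{\RZ'_1\mid\RZ_1}$ into a supremum of a continuous objective over a compact set, hence attained, by some $\RZ'_1$ with $I(\RZ'_1\wedge\RZ_2)=`a$, $I(\RZ'_1\wedge\RZ_1)\le`a$, and $\RZ'_1-\RZ_1-\RZ_2$. Data processing forces $I(\RZ'_1\wedge\RZ_1)=I(\RZ'_1\wedge\RZ_2)=`a$, hence $I(\RZ'_1\wedge\RZ_1\mid\RZ_2)=I(\RZ'_1\wedge\RZ_1)-I(\RZ'_1\wedge\RZ_2)=0$, so $\RZ'_1-\RZ_2-\RZ_1$ is \emph{also} a Markov chain. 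The double Markov property (the same one underpinning \eqref{eq:JGK:2} and used in \cite{csiszar00,tyagi13}) then yields a random variable $\RV$ with $H(\RV\mid\RZ_1)=H(\RV\mid\RZ_2)=0$ such that $\RZ'_1-\RV-(\RZ_1,\RZ_2)$; consequently $`a=I(\RZ'_1\wedge\RZ_2)\le I(\RV\wedge\RZ_2)\le H(\RV)\le\JGK(\RZ_1\wedge\RZ_2)$, contradicting $`a>\JGK(\RZ_1\wedge\RZ_2)$. I expect the attainment step to be the main obstacle: for infinite-alphabet sources the cardinality bound needs care, and if the supremum is not attained one must replace this argument with a robust version of the double Markov inequality applied to $\delta$-optimal $\RZ'_1$, extracting a common function $\RV$ with $H(\RV)\le\JGK(\RZ_1\wedge\RZ_2)+o(1)$ as $\delta\to0$; one also has to ensure the form of the double Markov property invoked is valid for the source model at hand.
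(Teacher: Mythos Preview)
Your proposal is correct and follows essentially the same route as the paper: the bound via data processing, concavity/continuity via time-sharing, the ``if'' direction via a common function, and the ``only if'' direction via the double Markov property applied to an optimizer. Two minor points of comparison: for the ``if'' direction the paper simply takes $\RZ'_1$ to be (a function of) a common function $\RU$ with $H(\RU)\ge`a$, whereas you fix the endpoint and interpolate by concavity---both work; for the ``only if'' direction you are actually more careful than the paper, which tacitly assumes the supremum in \eqref{eq:tCS1} is attained (it just writes ``then, in addition to $\RZ'_1"-"\RZ_1"-"\RZ_2$, we also have $\RZ'_1"-"\RZ_2"-"\RZ_1$'' without justifying existence of an optimizer), merely prefacing the argument with ``suppose $J_{\op{GK}}(\RZ_1\wedge\RZ_2)$ is finite.'' Your worry about attainment and the suggested cardinality-bound/robust-double-Markov fallback is therefore a genuine refinement rather than a deviation.
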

\begin{proof}
	Monotonicity is obvious. Continuity and concavity can be shown by the usual time-sharing argument as in Proposition~\ref{pro:basic}.	\eqref{eq:UB1} follows directly from the data processing inequality that $I(\RZ'_1 \wedge \RZ_2 ) \leq I(\RZ'_1 \wedge \RZ_1 )$ under the Markov chain $\RZ'_1"-"\RZ_1"-"\RZ_2$ required in \eqref{eq:tCS1:3}. If $`a\leq J_{\op {GK}}(\RZ_1\wedge \RZ_2)$, then there exist a feasible solution $\RU$ to \eqref{eq:JGK:2} (a common function of $\RZ_1$ and $\RZ_2$) with $H(\RU)\geq `a$, and so the compressed sources $\tRZ_1$ and $\tRZ_2$ can be chosen as a function of $\RU^n$ to achieve the equality for \eqref{eq:UB1}. Conversely, suppose $J_{\op {GK}}(\RZ_1\wedge \RZ_2)$ is finite and \eqref{eq:UB1} is satisfied with equality. Then, in addition to $\RZ'_1"-"\RZ_1"-"\RZ_2$, we also have $\RZ'_1"-"\RZ_2"-"\RZ_1$, which implies by the double Markov property that, for the maximum common function $\RU$ achieving $J_{\op{GK}}(\RZ_1\wedge \RZ_2)$ defined in \eqref{eq:JGK:2},
	\begin{align*}
	I(\RZ'_1\wedge \RZ_1,\RZ_2|\RU) = 0\kern1em \text {(or $\RZ'_1"-"\RU"-"(\RZ_1,\RZ_2)$)}. \label{eq:UB1:opt}
	\end{align*}
	In other words, the optimal $\RZ'_1$ is a stochastic function of the maximum common function of $\RZ_1$ and $\RZ_2$, and so $`a=I(\RZ'_1\wedge \RZ_2) \leq J_{\op{GK}}(\RZ_1\wedge \RZ_2)$ as desired.\qed
\end{proof}
We will show that the above upper bound in \eqref{eq:UB1} extends to the multi-user case (in Proposition~\ref{pro:GK}). However, for $`a\geq J_{\op {GK}}(\RZ_1\wedge \RZ_2)$, the above upper bound is not tight even in the two-user case. To improve the upper bound, the following duality between $\tdC_{\op {S},1}$ and $C_{\opS,1}$ will be used and extended to the multi-user case (in Theorem~\ref{thm:duality}).
\begin{proposition}
	\label{pro:duality1}
	For $`a\geq J_{\op {GK}}(\RZ_1\wedge \RZ_2)$,
	\begin{align}
	\tdC_{\op {S},1}(`a) = C_{\op {S},1}(`a-\tdC_{\op {S},1}(`a)).\label{eq:duality1}
	\end{align}
	Furthermore, the set of optimal solutions to the left (achieving $\tdC_{\op {S},1}(`a)$ defined in \eqref{eq:tCS1}) is the same as the set of optimal solutions to the right (achieving $C_{\op {S},1}(R)$ in \eqref{eq:CS1} with $R=`a-\tdC_{\op {S},1}(`a)$). It follows that the minimum admissible entropy~\eqref{eq:RS} but with the entropy constraint on user~$1$ instead is
	\begin{align}
	`a_{\opS,1} &= R_{\opS,1}+I(\RZ_1\wedge \RZ_2) = J_{\opW,1}(\RZ_1\wedge \RZ_2)
	\end{align}
	where $R_{\opS,1}$ and $J_{\opW,1}(\RZ_1\wedge \RZ_2)$ are defined in \eqref{eq:RS1} and \eqref{eq:JW1} respectively.
\end{proposition}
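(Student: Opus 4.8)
The plan is to prove the two inequalities $\leq$ and $\geq$ in \eqref{eq:duality1} separately by showing that feasible solutions transfer between the two optimization problems \eqref{eq:tCS1} and \eqref{eq:CS1}, and then to identify the optimal solution sets. For the direction $\tdC_{\op{S},1}(`a) \leq C_{\op{S},1}(`a - \tdC_{\op{S},1}(`a))$, I would take an optimal $\RZ'_1$ achieving $\tdC_{\op{S},1}(`a)$, so that $I(\RZ'_1\wedge\RZ_1) \leq `a$ and the Markov chain $\RZ'_1 "-" \RZ_1 "-" \RZ_2$ holds. Then $I(\RZ'_1\wedge\RZ_1) - I(\RZ'_1\wedge\RZ_2) \leq `a - \tdC_{\op{S},1}(`a)$ since $I(\RZ'_1\wedge\RZ_2) = \tdC_{\op{S},1}(`a)$, so $\RZ'_1$ is feasible for \eqref{eq:CS1} with $R = `a - \tdC_{\op{S},1}(`a)$ and attains objective value $\tdC_{\op{S},1}(`a)$; hence $C_{\op{S},1}(`a - \tdC_{\op{S},1}(`a)) \geq \tdC_{\op{S},1}(`a)$. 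The subtlety here is whether the constraint $I(\RZ'_1\wedge\RZ_1)\leq `a$ is tight at the optimum; if it is slack one still gets the inequality, but for the converse direction tightness will matter, so I would record that for $`a \geq \JGK(\RZ_1\wedge\RZ_2)$ one may assume the constraint is active (otherwise by Proposition~\ref{pro:UB1} and concavity one could increase the objective, using that equality in \eqref{eq:UB1} fails past $\JGK$).

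For the reverse direction, let $\RZ'_1$ be optimal for $C_{\op{S},1}(R)$ with $R := `a - \tdC_{\op{S},1}(`a)$, so $I(\RZ'_1\wedge\RZ_1) - I(\RZ'_1\wedge\RZ_2) \leq R$ and the Markov chain holds. Write $I(\RZ'_1\wedge\RZ_2) = C_{\op{S},1}(R)$. Then $I(\RZ'_1\wedge\RZ_1) \leq R + C_{\op{S},1}(R) = `a - \tdC_{\op{S},1}(`a) + C_{\op{S},1}(R)$. If I can show $C_{\op{S},1}(R) \leq \tdC_{\op{S},1}(`a)$—which follows once the first inequality is in hand, since then $C_{\op{S},1}(R) = C_{\op{S},1}(`a - \tdC_{\op{S},1}(`a)) \geq \tdC_{\op{S},1}(`a)$ gives only one side—I need instead to argue directly. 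The cleaner route is a fixed-point/monotonicity argument: define $g(`a)$ to be the left side and use that $C_{\op{S},1}$ is non-decreasing and concave (from \cite{csiszar00}) together with $\tdC_{\op{S},1} \leq `a$. I would set up the map $t \mapsto C_{\op{S},1}(`a - t)$ on $[0,`a]$, observe it is non-increasing and continuous while $t \mapsto t$ is increasing, so they cross at a unique point $t^\*$; feasibility transfer in both directions shows $\tdC_{\op{S},1}(`a)$ is a fixed point of $t = C_{\op{S},1}(`a-t)$, and uniqueness of the crossing forces $\tdC_{\op{S},1}(`a) = t^\* = C_{\op{S},1}(`a - \tdC_{\op{S},1}(`a))$.

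The claim about optimal solution sets then falls out of the feasibility transfers above: any optimizer of one problem is feasible for the other with the same objective value, and since the optimal values coincide, it is an optimizer there too. Finally, for the formula for $`a_{\opS,1}$, I would use \eqref{eq:C`8:`a}-type reasoning: $`a_{\opS,1}$ is the smallest $`a$ with $\tdC_{\op{S},1}(`a) = \CS(`8) = I(\RZ_1\wedge\RZ_2)$ (the unconstrained two-user secrecy capacity being Shannon mutual information). Plugging $\tdC_{\op{S},1}(`a_{\opS,1}) = I(\RZ_1\wedge\RZ_2)$ into \eqref{eq:duality1} gives $I(\RZ_1\wedge\RZ_2) = C_{\op{S},1}(`a_{\opS,1} - I(\RZ_1\wedge\RZ_2))$, and since $R_{\opS,1}$ is by \eqref{eq:RS1} the smallest $R$ with $C_{\op{S},1}(R) = I(\RZ_1\wedge\RZ_2)$, one reads off $`a_{\opS,1} - I(\RZ_1\wedge\RZ_2) = R_{\opS,1}$, i.e. $`a_{\opS,1} = R_{\opS,1} + I(\RZ_1\wedge\RZ_2) = \JWone(\RZ_1\wedge\RZ_2)$ by \eqref{eq:RS1}. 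The main obstacle I anticipate is making the fixed-point argument airtight—specifically ruling out that the constraint $I(\RZ'_1\wedge\RZ_1)\leq`a$ is slack at optimality when $`a > \JGK$, which is exactly where the hypothesis $`a \geq \JGK(\RZ_1\wedge\RZ_2)$ enters and is used via the "if and only if" in Proposition~\ref{pro:UB1}.
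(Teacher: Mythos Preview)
Your proposal is correct, and the forward direction ($\leq$ in \eqref{eq:duality1}, showing an optimizer of $\tdC_{\opS,1}$ is feasible for $C_{\opS,1}$) matches the paper exactly. For the reverse direction, however, you take a genuinely different route. The paper argues by contradiction: supposing some $\RZ''_1$ strictly beats $\RZ'_1$ for $C_{\opS,1}(R)$, it first forces $I(\RZ''_1\wedge\RZ_1)>`a$ via a perturbation argument, then combines the concavity of $\tdC_{\opS,1}$ with the strict inequality $\tdC_{\opS,1}(`a)<`a$ for $`a>J_{\op{GK}}$ (Proposition~\ref{pro:UB1}) to contradict feasibility of $\RZ''_1$. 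Your fixed-point approach sidesteps this: since $t\mapsto t-C_{\opS,1}(`a-t)$ is strictly increasing and changes sign on $[0,`a]$ (using $C_{\opS,1}(0)=J_{\op{GK}}\leq `a$), the crossing $t^*$ is unique; the forward feasibility transfer gives $\tdC_{\opS,1}(`a)\leq t^*$, and transferring the $C_{\opS,1}(`a-t^*)$-optimizer back gives $\tdC_{\opS,1}(`a)\geq t^*$. This is cleaner and uses concavity only implicitly (through continuity and monotonicity of $C_{\opS,1}$). Two small remarks: your phrase ``feasibility transfer in both directions shows $\tdC_{\opS,1}(`a)$ is a fixed point'' is slightly loose, since the backward transfer should be run at $R=`a-t^*$, not at $R=`a-\tdC_{\opS,1}(`a)$, to avoid circularity; and your discussion of tightness of $I(\RZ'_1\wedge\RZ_1)\leq `a$ is not actually needed in your fixed-point route, though it is central to the paper's contradiction argument. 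Your derivation of $`a_{\opS,1}$ is the same as what the duality yields and is fine.
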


\begin{proof}
	Set $R=`a-\tdC_{\op {S},1}(`a)$. 
	Consider first an optimal solution $\RZ'_1$ to $\tdC_{\op {S},1}(`a)$ and show that it is also an optimal solution to $C_{\op {S},1}(R)$. By optimality,
	\begin{align}
	I(\RZ'_1\wedge \RZ_2)=\tdC_{\op {S},1}(`a). \label{eq:duality1:t1}
	\end{align}
	By the constraint \eqref{eq:tCS1:2}, $I(\RZ'_1\wedge \RZ_1)\leq `a$. It follows that the constraint \eqref{eq:CS1:2} holds, and so $\RZ'_1$ is a feasible solution to $C_{\op {S},1}(R)$, i.e., we have $\geq$ for \eqref{eq:duality1} that
	\begin{align}
	\tdC_{\op {S},1}(`a) \geq C_{\op {S},1}(`a-\tdC_{\op {S},1}(`a)).\label{eq:duality1:t2}
	\end{align}
	To show that $\RZ'_1$ is also optimal to $C_{\op {S},1}(R)$, suppose to the contrary that there exists a strictly better solution $\RZ''_1$ to $C_{\op {S},1}(R)$, i.e., with
	\begin{align}
	I(\RZ''_1\wedge \RZ_2)>I(\RZ'_1\wedge \RZ_2)=\tdC_{\op {S},1}(`a).\label{eq:duality1:t3}
	\end{align}	
	It follows that 
	\begin{align}
	I(\RZ''_1\wedge \RZ_1)>I(\RZ'\wedge \RZ_1)=`a.\label{eq:duality1:t4}
	\end{align}
	The last equality means that the constraint~\eqref{eq:tCS1:2} is satisfied with equality. If to the contrary that the equality does not hold, setting $\RZ'_1$ to be $\RZ''_1$ for some fraction $`l>0$ of time gives a better solution to  $C_{\op {S},1}(R)$, contradicting the optimality of $\RZ'_1$. The first inequality can also be argued similarly by the optimality of $\RZ'_1$. Now, we have
	\begin{align*}
	\frac {I(\RZ''_1\wedge \RZ_2)-I(\RZ'_1\wedge \RZ_2)}{I(\RZ''_1\wedge \RZ_1)-I(\RZ'_1\wedge \RZ_1)}\utag{a}\leq \frac{I(\RZ'_1\wedge \RZ_2)}{I(\RZ'_1\wedge \RZ_1)} \utag{b}\leq 1,
	\end{align*}
	where \uref{a} is by the concavity of $\tdC_{\op {S},1}(`a)$; and \uref{b} is by the upper bound $\tdC_{\op {S},1}(`a)\leq `a$ in~\eqref{eq:UB1}. N.b., equality cannot hold simultaneously for \uref{a} and \uref{b} because, otherwise, we have $\frac{I(\RZ''_1\wedge \RZ_2)}{I(\RZ''_1\wedge \RZ_1)}=1$, which, together with \eqref{eq:duality1:t3} and \eqref{eq:duality1:t4}, contradicts the result in Proposition~\ref{pro:UB1} that $\tdC_{\op {S},1}(`a)< `a$ (with strict inequality) for $`a> J_{\op {GK}}(\RZ_1\wedge \RZ_2)$. Hence,
	\begin{align*}
	\frac {I(\RZ''_1\wedge \RZ_2)-I(\RZ'_1\wedge \RZ_2)}{I(\RZ''_1\wedge \RZ_1)-I(\RZ'_1\wedge \RZ_1)}< 1,
	\end{align*}
	which, together with \eqref{eq:duality1:t3} and \eqref{eq:duality1:t4}, implies
	\begin{align*}
	I(\RZ''_1\wedge \RZ_1)-I(\RZ''_1\wedge \RZ_2) > `a-\tdC_{\op {S},1}(`a) =R
	\end{align*}
	contradicting even the feasibility of $\RZ''_1$ to $C_{\op {S},1}(R)$, namely, the constraint~\eqref{eq:CS1:2} with $\RZ'_1$ replaced with $\RZ''_1$. This completes the proof of the optimality of $\RZ'_1$ to $C_{\op {S},1}(R)$.
	
	Next, consider showing that an optimal solution $\RZ'_1$ to $C_{\op {S},1}(R)$ is also optimal to $\tdC_{\op {S},1}(`a)$. Then, 	
	\begin{align*}
	I(\RZ'_1\wedge \RZ_1)\leq R+I(\RZ'_1\wedge \RZ_2)=`a-\tdC_{\op {S},1}(`a)+C_{\op {S},1}(R) \leq `a
	\end{align*}
	where the first inequality is by \eqref{eq:CS1:2}; the second equality is by the optimality of $\RZ'_1$; and the last inequality follows from \eqref{eq:duality1:t2}. Hence, the constraint~\eqref{eq:tCS1:2} holds and so $\RZ'_1$ is a feasible solution for $\tdC_{\op {S},1}(`a)$. If to the contrary that we have a better solution $\RZ''_1$ for $\tdC_{\op {S},1}(`a)$, then $\RZ''_1$ can be shown to be a feasible solution for $C_{\op {S},1}(R)$, contradicting the optimality of $\RZ'_1$.\qed
\end{proof}

\section{Main Results}
\label{sec:results}

The following extends the single-letter upper bound~\eqref{eq:UB1} in Proposition~\ref{pro:UB1} to the muli-user case.
\begin{proposition}
	\label{pro:GK}
	$\tCS(`a)\leq `a$ with equality if $`a\leq J_{\op {GK}}(\RZ_V)$.
\end{proposition}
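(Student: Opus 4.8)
The plan is to read off both assertions directly from the multi-letter identity of Proposition~\ref{pro:tCS:ML}, together with the elementary fact that multivariate mutual information never exceeds joint entropy.

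For the upper bound I would first record that every discrete random vector $\tRZ_V$ with finite joint entropy satisfies $I(\tRZ_V)\le H(\tRZ_V)$: evaluating the minimum in \eqref{eq:mmi} at the partition $\mcP_0$ of $V$ into singletons, and using $H(\tRZ_i)\le H(\tRZ_V)$ for each $i$,
\begin{align*}
I(\tRZ_V)&\le I_{\mcP_0}(\tRZ_V)=\frac{1}{\abs{V}-1}\Bigl[\sum_{i\in V}H(\tRZ_i)-H(\tRZ_V)\Bigr]\\
&\le\frac{\abs{V}\,H(\tRZ_V)-H(\tRZ_V)}{\abs{V}-1}=H(\tRZ_V).
\end{align*}
A compressed source maps to a finite set, so this applies to it and $\tfrac1n I(\tRZ_V)\le\tfrac1n H(\tRZ_V)$ for every $n$; passing to the limit and using the joint entropy limit \eqref{eq:`a} gives $\lim_{n\to\infty}\tfrac1n I(\tRZ_V)\le`a$. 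Taking the supremum over valid compressed sources and invoking Proposition~\ref{pro:tCS:ML} yields $\tCS(`a)\le`a$.

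For the equality when $`a\le J_{\op {GK}}(\RZ_V)$ I would exhibit a compressed source of MMI rate $`a$. Choose a common function $\RU$ of $\RZ_V$ (so $H(\RU\mid\RZ_i)=0$ for all $i$) with $H(\RU)\ge`a$ — this exists because either $J_{\op {GK}}(\RZ_V)$ is finite and attained by the maximum common function, or it is infinite and the supremum in \eqref{eq:JGK} ranges over functions of unbounded entropy. Over $n$ i.i.d.\ samples, $\RU^n$ remains a common function of each $\RZ_i^n$, and by partitioning its weakly typical set into about $2^{n`a}$ blocks of nearly equal probability one obtains a finite-valued function $\psi_n$ of $\RU^n$ with $\tfrac1n H(\psi_n(\RU^n))\to`a$ (using $`a\le H(\RU)$). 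Setting $\tRZ_i:=\psi_n(\RU^n)$ for all $i\in V$, with trivial private randomization, produces a valid compressed source all of whose components coincide; hence $I(\tRZ_V)=H(\tRZ_V)=H(\psi_n(\RU^n))$, so \eqref{eq:`a} holds with value $`a$ and $\lim_{n\to\infty}\tfrac1n I(\tRZ_V)=`a$. Proposition~\ref{pro:tCS:ML} then gives $\tCS(`a)\ge`a$, and together with the upper bound, $\tCS(`a)=`a$.

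The only step that is not an immediate consequence of Proposition~\ref{pro:tCS:ML} and the singleton-partition bound is the construction of $\psi_n$, i.e.\ tuning the entropy rate of a function of $\RU^n$ down to the prescribed target $`a$ rather than up to $H(\RU)$, which is the routine typical-set splitting described above. An alternative that sidesteps this is to prove $\tCS(J_{\op {GK}}(\RZ_V))\ge J_{\op {GK}}(\RZ_V)$ when $J_{\op {GK}}$ is finite, exactly by the no-discussion argument already used for \eqref{eq:CS0}, then use $\tCS(0)=0$ (from the upper bound) and the concavity in Proposition~\ref{pro:basic} to interpolate to all $`a\in[0,J_{\op {GK}}(\RZ_V)]$; when $J_{\op {GK}}(\RZ_V)=\infty$ a finite-truncation step is still needed to realize a finite-alphabet compressed source at each finite $`a$, after which continuity of $\tCS$ finishes the argument.
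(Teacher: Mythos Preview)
Your proof is correct and follows essentially the same approach as the paper: bound the key rate by the joint entropy of the compressed source for the upper bound, and use the maximum common function to build a compressed source achieving rate $`a$ for the equality case. The only cosmetic difference is that you route both parts through the multi-letter MMI identity of Proposition~\ref{pro:tCS:ML} (bounding $I(\tRZ_V)\le H(\tRZ_V)$ via the singleton partition), whereas the paper appeals directly to \eqref{eq:CSRCO}; since $I(\tRZ_V)=H(\tRZ_V)-\tRCO$ these are equivalent, and your added care with the $\psi_n$ construction simply makes explicit what the paper's one-line ``compress to a common key at rate $`a$'' leaves implicit.
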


\begin{proof}
	The upper bound $\tCS(`a)\leq `a$ is because $n\tCS(`a)$ cannot exceed the unconstrained secrecy capacity for the compressed source $\tRZ_V$, which, by \eqref{eq:CSRCO} and \eqref{eq:`a}, is upper bounded by $H(\tRZ_V)\leq n`1[`a+`d_n`2]$ for some $`d_n\to 0$ as $n\to `8$. 
	
	Next, to prove the equality condition is sufficient, suppose $`a\leq J_{\op {GK}}(\RZ_V)$. Then, each user can compress their source directly to a common secret key at rate $`a$ without any public discussion. Hence, $\tCS(`a)=`a$ as desired.\qed
\end{proof}
N.b., unlike the two-user case in Proposition~\ref{pro:UB1}, the equality condition above in terms of the multivariate G\'acs--K\"orner common information is sufficient but not shown to be necessary. Nevertheless, necessity seems very plausible, as there seems to be no counter-example that suggests otherwise.


As in Proposition~\ref{pro:duality1}, a duality can be proved in the multi-user case, relating the compressed secret key agreement problem to the constrained secrecy key agreement problem.
\begin{theorem}
	\label{thm:duality}
	With $\CS(R)$ and $\RS$ defined in \eqref{eq:CSR} and \eqref{eq:RS} respectively, we have
	\begin{subequations}
		\begin{align}
		`a_{\opS} &\geq \RS + \CS(`8)\label{eq:`aSRS}\\
		\tCS(`a) &\leq \CS(`a-\tCS(`a)) \label{eq:tCSRCSR}
		\end{align}
		\label{eq:duality}
	\end{subequations}
	for all $`a\geq 0$.
\end{theorem}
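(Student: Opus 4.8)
The plan is to prove the inequality \eqref{eq:tCSRCSR} first, by converting a near-optimal compressed source into a secret key agreement scheme with \emph{rate-limited} discussion on the original source, and then to read off \eqref{eq:`aSRS} as a corollary. For \eqref{eq:tCSRCSR}, fix $\epsilon>0$. By the multi-letter characterization in Proposition~\ref{pro:tCS:ML} there is a valid compressed source $\tRZ_V$, obtained at some block length $n$ via $\tRZ_i=`z_i(\RU_i,\RZ_i^n)$ as in \eqref{eq:tZ}, with $\frac1n H(\tRZ_V)\le`a+\epsilon$ (by the entropy limit \eqref{eq:`a}, taking $n$ large) and $\frac1n I(\tRZ_V)\ge\tCS(`a)-\epsilon$. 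The essential point is that $\tRZ_V$ is finite-valued, so the finite-source results \eqref{eq:CSRCO} and \eqref{eq:mmi} apply to the source $P_{\tRZ_V}$ itself: its unconstrained secrecy capacity equals $I(\tRZ_V)$ and is attained by non-interactive communication for omniscience using discussion at total rate $H(\tRZ_V)-I(\tRZ_V)$. I would then run the following scheme on the original i.i.d.\ source $\RZ_V$: the users observe $\RZ_V^{nm}$, split it into $m$ blocks of length $n$, in each block draw fresh private randomness and compute the corresponding independent copy of $\tRZ_i$, and apply the omniscience-based secret key agreement of \cite{csiszar04} to the resulting i.i.d.\ sequence of $m$ copies of $\tRZ_V$ (a legitimate finite source, each user $i$ holding a deterministic function of its own data). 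As $m\to\infty$ this achieves, per symbol of $\RZ_V$, a key rate approaching $\frac1n I(\tRZ_V)$ and a discussion rate approaching $\frac1n(H(\tRZ_V)-I(\tRZ_V))$, so $\CS(\frac1n(H(\tRZ_V)-I(\tRZ_V)))\ge\frac1n I(\tRZ_V)$. Combining $\frac1n H(\tRZ_V)\le`a+\epsilon$ and $\frac1n I(\tRZ_V)\ge\tCS(`a)-\epsilon$ with the monotonicity of $\CS$ gives $\CS(`a-\tCS(`a)+2\epsilon)\ge\tCS(`a)-\epsilon$, and letting $\epsilon\downarrow0$ with the continuity of $\CS$ \cite[Proposition~3.1]{chan17cska} yields $\tCS(`a)\le\CS(`a-\tCS(`a))$; this is well posed since $`a-\tCS(`a)\ge0$ by Proposition~\ref{pro:GK}, and since the omniscience protocol is non-interactive it holds for either convention of $\CS$.

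For \eqref{eq:`aSRS}, let $`a$ be any value with $\tCS(`a)=\CS(`8)$. By \eqref{eq:tCSRCSR}, $\CS(`8)=\tCS(`a)\le\CS(`a-\CS(`8))$, and since $\CS(R)\le\CS(`8)$ for every $R\ge0$ this forces $\CS(`a-\CS(`8))=\CS(`8)$; moreover $`a-\CS(`8)=`a-\tCS(`a)\ge0$ by Proposition~\ref{pro:GK}. By the definition \eqref{eq:RS} of $\RS$ as the infimum of the discussion rates at which $\CS$ saturates, $`a-\CS(`8)\ge\RS$, i.e., $`a\ge\RS+\CS(`8)$. Since this holds for every admissible $`a$ (vacuously if none exists), taking the infimum over such $`a$ gives $`a_{\op{S}}\ge\RS+\CS(`8)$ by \eqref{eq:`aS}.

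The main obstacle, and the reason the discussion budget improves from $`a$ to $`a-\tCS(`a)$, lies entirely in the first part: a generic compressed secret key agreement scheme may use public discussion at a rate far above $`a-\tCS(`a)$, so it cannot be reused verbatim as a rate-limited scheme; the crux is to discard its discussion altogether and instead recover the full secrecy capacity $\frac1n I(\tRZ_V)$ of the compressed source through the rate-minimal omniscience protocol, whose rate $H(\tRZ_V)-I(\tRZ_V)$ is precisely what the residual budget $`a-\tCS(`a)$ affords. Beyond this, the only care required is bookkeeping around the two nested limits ($n\to\infty$ defining the compressed source, $m\to\infty$ for omniscience on its i.i.d.\ repetitions) and the $\limsup$ in \eqref{eq:`a}, all absorbed routinely by the monotonicity and continuity of $\CS$ and $\tCS$.
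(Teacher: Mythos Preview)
Your proposal is correct and follows essentially the same approach as the paper: for \eqref{eq:tCSRCSR} you replace the discussion of an (approximately) optimal compressed source by the omniscience protocol on $\tRZ_V$, whose total rate is $\frac1n[H(\tRZ_V)-I(\tRZ_V)]\approx `a-\tCS(`a)$, and then you derive \eqref{eq:`aSRS} as a corollary by plugging in $`a$ at saturation. The paper argues this more tersely, while you spell out the $\epsilon$-bookkeeping and the nested $n,m$ limits, but the core idea and the key observation are identical.
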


\begin{proof}
	\eqref{eq:`aSRS} can be obtained from \eqref{eq:tCSRCSR} by setting $`a=`a_{\opS}$ as follows:
	\begin{align*}
	\CS(`8) \utag{a}\geq \CS(`a_{\opS} - \tCS(`a_{\opS})) 
	\utag{b}\geq \tCS(`a_{\opS}) \utag{c}= \CS(`8)
	\end{align*}
	where \uref{b} is given by \eqref{eq:tCSRCSR} with $`a=`a_{\opS}$; while \uref{a} and \uref{c} follows directly from \eqref{eq:C`8}, \eqref{eq:C`8:`a} and monotonicity. It follows that the inequalities \uref{a} and \uref{b} hold with equality. In particular, equality for \uref{a} means that $\CS(R)=\CS(`8)$ for $R\geq `a_{\opS}-\tCS(`a_{\opS})=`a_{\opS}-\CS(`8)$, implying \eqref{eq:`aSRS} as desired.
	
	To show \eqref{eq:tCSRCSR}, consider an optimal compressed secret key agreement scheme achieving $\tCS(`a)$ with an arbitrary entropy limit $`a$. 
	It suffices to show that the discussion rate need not be larger than $`a-\tCS(`a)$.
	Let $\tRZ_V$ be the optimal compressed source and $\tRCO$ be the smallest rate of communication for omniscience of $\tRZ_V$, which is given by \eqref{eq:RCO} with $\RZ_V$ replaced by $\tRZ_V$. The  discussion rate for the omniscience strategy is
	\begin{align*}
	\frac1n \tRCO &= \frac1n `1[H(\tRZ_V) - I(\tRZ_V)`2]
	\end{align*}
	by \eqref{eq:CSRCO}. This simplifies to $`a-\tCS(`a)$ as desired in the limit $n\to `8$. N.b., since the omniscience strategy is non-interactive, the desired hold even if $\CS$ and $R_{\opS}$ are defined with non-interactive discussion.\qed
\end{proof}

While it is obvious from the above proof that a compressed secret key agreement scheme can be used as a constrained secret key agreement scheme, yielding one of the best lower bounds for $\CS(R)$ in \cite{chan17cska},
the above result also means that a converse result on constrained secret key agreement can be applied to compressed secret key agreement. Upper bounds on $\tCS(`a)$ may be obtained from the upper bounds for $\CS(R)$ such as those in \cite{chan17cska}. It turns out that this approach can give better upper bounds which, surprisingly, is tight for the PIN model as mentioned in Section~\ref{sec:hyp}. This leads to the following exact single-letter characterization of $\tCS(`a)$.

\begin{theorem}
	\label{thm:PIN}
	For the PIN model in Definition~\ref{def:hyp},
	\begin{subequations}
		\label{eq:PIN:tCS`aS}
		\begin{align}
		`a_{\opS} &= (\abs {V}-1) \CS(`8)\label{eq:PIN:`aS}\\
		\tCS(`a) &= \min \Set*{\frac{`a}{\abs {V}-1},\CS(`8)}\label{eq:PIN:tCS}
		\end{align}
		for all $`a\geq 0$.
	\end{subequations}
\end{theorem}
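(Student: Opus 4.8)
The plan is to prove \eqref{eq:PIN:tCS} by a matching pair of bounds on $\tCS(`a)$ and then deduce \eqref{eq:PIN:`aS}. The case $\abs{V}=2$ is degenerate---the PIN then has $\RZ_1=\RZ_2$, the asserted formula reads $\tCS(`a)=\min\{`a,\CS(`8)\}$, its converse combines Proposition~\ref{pro:GK} with \eqref{eq:C`8:`a}, and its achievability is the discussion-free scheme that compresses to a common function---so assume $\abs{V}\ge 3$. For the converse half of \eqref{eq:PIN:tCS} I would feed the known single-letter PIN formula \eqref{eq:PIN:CSR} for $\CS(R)$ into the duality bound \eqref{eq:tCSRCSR} of Theorem~\ref{thm:duality}. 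Since $`a-\tCS(`a)\ge 0$ by Proposition~\ref{pro:GK}, this gives
\[
\tCS(`a)\ \le\ \CS(`a-\tCS(`a))\ =\ \min\Set*{\frac{`a-\tCS(`a)}{\abs{V}-2},\ \CS(`8)} .
\]
The second branch yields $\tCS(`a)\le\CS(`8)$; from the first, $(\abs{V}-2)\tCS(`a)\le `a-\tCS(`a)$, i.e.\ $\tCS(`a)\le `a/(\abs{V}-1)$. Together these are exactly $\tCS(`a)\le\min\Set*{`a/(\abs{V}-1),\CS(`8)}$.

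For achievability I would revisit the tree-packing scheme that already proves the achievability of \eqref{eq:PIN:CSR} in \cite{chan17cska} (following \cite{nitinawarat-ye10})---it is a compressed scheme---and merely bound its joint entropy rate. Recall that for the PIN the maximum fractional spanning-tree packing rate of the underlying multigraph equals $\CS(`8)=I(\RZ_V)$: there are nonnegative weights $y_T$ on the spanning trees $T$ with $\sum_{T\ni e}y_T\le H(\RX_e)$ for every edge $e$ and $\sum_T y_T=\CS(`8)$, and by scaling any value $\rho\in[0,\CS(`8)]$ is achievable. Given such a packing of value $\rho$, let each user $i$ keep, as its compressed source, only the edge variables incident to $i$ used by the packing, and only up to the rate used---that is, $\tRZ_i$ retains, for each edge $e\ni i$, a prefix of $\RX_e^n$ of rate $\sum_{T\ni e}y_T$. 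Because every spanning tree has exactly $\abs{V}-1$ edges,
\[
\lim_{n\to`8}\frac1n H(\tRZ_V)\ =\ \sum_e\ \sum_{T\ni e}y_T\ =\ \sum_T y_T\,\abs{E(T)}\ =\ (\abs{V}-1)\,\rho ,
\]
so the entropy limit \eqref{eq:`a} is met at level $`a=(\abs{V}-1)\rho$. Running the usual tree-packing protocol on this compressed source---each spanning tree attaining omniscience of its own $\abs{V}-1$ edge variables with $\abs{V}-2$ units of public discussion, leaving one unit of residual common randomness for the key---produces a key of rate $\rho$ meeting \eqref{eq:recover} and \eqref{eq:secrecy}. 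Hence $\tCS((\abs{V}-1)\rho)\ge\rho$ for all $\rho\in[0,\CS(`8)]$; equivalently $\tCS(`a)\ge `a/(\abs{V}-1)$ on $[0,(\abs{V}-1)\CS(`8)]$, and taking $\rho=\CS(`8)$ with monotonicity gives $\tCS(`a)\ge\CS(`8)$ for $`a\ge(\abs{V}-1)\CS(`8)$. With the converse, \eqref{eq:PIN:tCS} follows.

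The value of $`a_{\opS}$ in \eqref{eq:PIN:`aS} then drops out: the lower bound $`a_{\opS}\ge(\abs{V}-1)\CS(`8)$ is the duality bound \eqref{eq:`aSRS} of Theorem~\ref{thm:duality} together with the PIN value $\RS=(\abs{V}-2)\CS(`8)$ of \eqref{eq:PIN:RS}, while the matching upper bound is the achievability just shown, $\tCS((\abs{V}-1)\CS(`8))\ge\CS(`8)$ (which equals $\CS(`8)$ since $\tCS\le\CS(`8)$ by \eqref{eq:C`8:`a}). The genuinely nontrivial step is the achievability, and inside it the two facts imported from \cite{nitinawarat-ye10,chan17cska}: that fractional spanning-tree packing exactly attains $\CS(`8)=I(\RZ_V)$ for the PIN, and that the tree-packing protocol is a valid secret key agreement scheme at the stated key and discussion rates. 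Granting those, what remains is bookkeeping---that the compressed sources are finite-valued, that the fractional weights $y_T$ are realized by splitting the $n$ i.i.d.\ source samples among the trees, and, the point really specific to this theorem, that discarding every edge unused by the packing makes the joint entropy rate $(\abs{V}-1)\rho$ rather than $H(\RZ_V)$. The converse, by contrast, is a one-line substitution into Theorem~\ref{thm:duality}.
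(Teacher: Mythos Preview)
Your proposal is correct and follows essentially the same approach as the paper: the converse plugs the PIN formula \eqref{eq:PIN:CSR} into the duality bound \eqref{eq:tCSRCSR} of Theorem~\ref{thm:duality}, and the achievability is the tree-packing protocol of \cite{nitinawarat-ye10,chan17cska} together with the observation that a packing of value $\rho$ uses joint entropy rate $(\abs{V}-1)\rho$. Your write-up is in fact more explicit than the paper's on two points---you treat the degenerate $\abs{V}=2$ case separately (the paper's use of $\CS^{-1}$ tacitly assumes $\abs{V}\ge 3$), and you spell out the joint-entropy bookkeeping for the compressed source---while the paper reads off \eqref{eq:PIN:`aS} directly from \eqref{eq:PIN:tCS} rather than via \eqref{eq:`aSRS}, but these are cosmetic differences.
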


\begin{proof}
	\eqref{eq:PIN:`aS} follows easily from \eqref{eq:PIN:tCS} by setting the two terms in the minimization to be equal and solving for $`a$. To show \eqref{eq:PIN:tCS}, note that,		
	by \eqref{eq:PIN:CSR}, we have
	\begin{align*}
	\CS^{-1}(`g) = (\abs {V}-2) `g\kern1em \forall `g<\CS(`8)
	\end{align*}
	because $\CS(R)$ is non-decreasing and concave, and so it must be strictly non-decreasing before it reaches $\CS(`8)=\CS(`8)$.	Now, by \eqref{eq:tCSRCSR},
	\begin{align*}
	`a-\tCS(`a) &\geq \CS^{-1}(\tCS(`a))\\
	&= (\abs {V}-2) \tCS(`a)
	\end{align*}
	for any $`a\geq 0$ such that $\tCS(`a)<\CS(`8)$, i.e., for $`a\leq `a_S$, and so $\tCS(`a)\leq \frac{`a}{\abs {V}-1}$. This implies $\leq$ for \eqref{eq:PIN:tCS}. The bound is achievable by the same achieving scheme in~\cite[Theorem~4.4]{chan17cska} along the idea of decremental secrecy key agreement and the tree packing protocol in \cite{nitinawarat-ye10}. More precisely, every $(\abs {V}-1)$ bits of edge variable forming a spanning tree are turned into a secret key bit by the tree packing protocol. This results in the factor of $(\abs {V}-1)$ in \eqref{eq:PIN:tCS`aS}, which corresponds to the number of edges in a spanning tree.\qed
\end{proof}


For the more general source model, the idea of decremental secret key agreement needs to be refined because there need not be any edge variables to remove. The following is a simple extension that leads to a single-letter lower bound on $\tCS(`a)$.

\begin{theorem}
	\label{thm:LB}
	A single-letter lower bound on $\tCS(`a)$ is
	\begin{align}
	\tCS(`a) &\geq I(\RZ_V'|\RQ)\label{eq:LB}
	\end{align}
	for any random vector $(\RQ,\RZ_V')$ taking values from a finite set and satisfying
	\begin{subequations}
		\label{eq:LB:c}
		\begin{align}
		I(\RQ \wedge \RZ_V) &= 0 \label{eq:LB:1}\\
		H(\RZ_i'|\RZ_i,\RQ) &= 0 \kern1em \forall i\in V\label{eq:LB:2}\\
		H(\RZ_V'|\RQ) &\leq `a.\label{eq:LB:3}
		\end{align}
	\end{subequations}
	Furthermore, it is admissible to have $\abs {Q}\leq 3$.
\end{theorem}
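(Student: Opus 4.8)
The bound is an achievability claim, so the plan is to exhibit an explicit compression scheme and appeal to the multi‑letter characterization of Proposition~\ref{pro:tCS:ML}: it suffices to produce, for a sequence of block lengths $n$, a valid compressed source $\tRZ_V$ with $\limsup_{n\to\infty}\frac1n H(\tRZ_V)\le`a$ and $\lim_{n\to\infty}\frac1n I(\tRZ_V)\ge I(\RZ_V'\mid\RQ)$. I would use $\RQ$ as a coded time‑sharing variable. Fix any $(\RQ,\RZ_V')$ satisfying \eqref{eq:LB:c}; by \eqref{eq:LB:2} each $\RZ_i'=g_i(\RZ_i,\RQ)$ for some deterministic $g_i$. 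Choose a deterministic sequence $q^n\in Q^n$ whose empirical distribution converges to the law $P_\RQ$ (possible since $Q$ is finite), regard $q^n$ as part of the protocol description, and set $\tRZ_{it}:=g_i(\RZ_{it},q_t)$ and $\tRZ_i:=(\tRZ_{it})_{t\in[n]}$. This is a legitimate compression map $`z_i$ in \eqref{eq:tZ} (no private randomization is needed), and \eqref{eq:LB:1} is exactly what guarantees that conditioning on $\RQ$ does not alter the marginal law of any $\RZ_i$, so the per‑letter conditional quantities below are the ones read off from $(\RQ,\RZ_V')$.

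It then remains to compute the two rates. Since the source is i.i.d.\ and $q^n$ is deterministic, the $n$ letters of $\tRZ_V$ are mutually independent; hence $H(\tRZ_V)=\sum_{t\in[n]}H(\RZ_V'\mid\RQ=q_t)$, and as the empirical distribution of $q^n$ converges, $\frac1n H(\tRZ_V)\to H(\RZ_V'\mid\RQ)\le`a$ by \eqref{eq:LB:3}, so \eqref{eq:`a} holds. For the MMI, each functional $I_{\mcP}$ in \eqref{eq:I} is additive over independent components, so $I_{\mcP}(\tRZ_V)=\sum_{t\in[n]}I_{\mcP}(\RZ_V'\mid\RQ=q_t)$ and $\frac1n I_{\mcP}(\tRZ_V)\to I_{\mcP}(\RZ_V'\mid\RQ)$ for each $\mcP$. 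Because $\Pi'(V)$ is finite, $\frac1n I(\tRZ_V)=\min_{\mcP}\frac1n I_{\mcP}(\tRZ_V)\to\min_{\mcP}I_{\mcP}(\RZ_V'\mid\RQ)=I(\RZ_V'\mid\RQ)$, and Proposition~\ref{pro:tCS:ML} yields $\tCS(`a)\ge I(\RZ_V'\mid\RQ)$. It is precisely because the $\min$ over $\Pi'(V)$ sits outside the sum over $t$ that a non‑trivial $\RQ$ can strictly help: time‑sharing makes a single partition usable across all regimes of $\RQ$, which can beat the partition‑by‑partition quantity $\sum_q P_\RQ(q)\,I(\RZ_V'\mid\RQ=q)$.

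For the cardinality statement I would run a Fenchel--Eggleston--Carath\'eodory argument on the optimization defining the right‑hand side of \eqref{eq:LB}, viewing a general $\RQ$ as time‑sharing over finitely many ``pure'' strategies $s=(g_i)_{i\in V}$. The quantities to control are $H(\RZ_V'\mid\RQ)$, which must stay $\le`a$, and the objective $I(\RZ_V'\mid\RQ)$, which must not decrease; with the normalization of $P_\RQ$ this is three constraints, so one expects $\abs{Q}\le 3$ to suffice. The step I expect to be the main obstacle is that $I(\cdot\mid\RQ)$ is a pointwise minimum over $\Pi'(V)$ of functionals that are not linear in the time‑sharing weights, so ``preserving its value'' is not literally one linear constraint — a naive Carath\'eodory reduction applied to the whole vector $\bigl(H(\RZ_V'\mid\RQ),(I_{\mcP}(\RZ_V'\mid\RQ))_{\mcP\in\Pi'(V)}\bigr)$ only gives $\abs{Q}\le\abs{\Pi'(V)}+2$, which grows with $\abs{V}$. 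To obtain the clean bound I would pass to an optimal configuration, identify a binding partition $\mcP^{*}$ of the conditional MMI, and argue (by a perturbation/continuity argument exploiting the uniqueness of the finest optimal partition) that one may take $\mcP^{*}$ to be the sole minimizer; then the objective is locally linear in the mixing weights, and a genuine Carath\'eodory reduction preserving the two scalars $H(\RZ_V'\mid\RQ)$ and $I_{\mcP^{*}}(\RZ_V'\mid\RQ)$ delivers $\abs{Q}\le 3$.
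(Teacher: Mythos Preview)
Your achievability construction---fix a deterministic type-approximating sequence $q^n$, apply $g_i(\cdot,q_t)$ letter-by-letter, and read off the rates via Proposition~\ref{pro:tCS:ML}---is exactly the time-sharing scheme the paper uses, and your handling of the MMI (carrying the $\min_{\mcP}$ through the additive decomposition over independent letters, then interchanging with the limit since $\Pi'(V)$ is finite) is if anything more careful than the paper's writeup.

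The divergence is in how $I(\RZ_V'\mid\RQ)$ is read. You take it to be $\min_{\mcP\in\Pi'(V)} I_{\mcP}(\RZ_V'\mid\RQ)$, with the minimum \emph{outside} the expectation over $\RQ$; the paper takes it to be the averaged conditional MMI $\sum_{q}P_{\RQ}(q)\,I(\RZ_V'\mid\RQ=q)$, i.e.\ $\E_{\RQ}\bigl[\min_{\mcP}I_{\mcP}(\cdot)\bigr]$. Under the paper's reading both scalars $I(\RZ_V'\mid\RQ)$ and $H(\RZ_V'\mid\RQ)$ are \emph{linear} in the mixing weights $P_{\RQ}$, so the cardinality bound is a one-line Carath\'eodory argument in the plane $\mathscr S\subseteq\mathbb R^2$ of achievable $(I,H)$ pairs for a single $q$, giving $\abs{Q}\le 3$ immediately. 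Your stronger reading (which indeed gives a tighter achievable bound, and your achievability proof for it is valid) is precisely what creates the obstacle you flagged; moreover, your proposed fix does not close it: even if $\mcP^{*}$ is the unique minimizer before the reduction, a Carath\'eodory step preserving only $I_{\mcP^{*}}(\RZ_V'\mid\RQ)$ and $H(\RZ_V'\mid\RQ)$ may let some other $I_{\mcP}$ drop below the preserved value, so the outer minimum can strictly decrease. In short, for the cardinality claim as stated, the paper sidesteps the difficulty by using the linear (weaker) form of the bound.
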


\begin{proof}
	By \eqref{eq:LB:2}, we have $\RZ_i'=`x_i (\RZ_i,\RQ)$ for some function $`x_i$. W.l.o.g., let $Q:=\Set {1,\dots, k}$ for some integer $k>0$. Choose $\tRZ_i$ to be the following function of $\RZ_i^n$:
	\begin{align*}
	\tRZ_i &= ((`x_i(\RZ_{i`t},q)\mid n_{q-1} < `t \leq n_q ) \mid 1 \leq q\leq k)\kern1em \text {where}\\
	n_0 &= 0\kern1em \text {and}\kern1em n_q=`1\lfloor n\sum_{j=1}^q P_{\RQ}(j) `2\rfloor\kern1em  \text {for $1\leq q\leq k$.}
	\end{align*}
	Basically, $\RQ$ acts as a time-sharing random variable where $P_{\RQ}(q)$ is the fraction of time the source $\RZ_i$ is processed to $\RZ_i^{(q)}:=`x_i(\RZ_i,q)$, for $1\leq q\leq k$. More precisely, we have $\frac{n_q-n_{q-1}}{n}$ converge to $P_{\RQ}(q)$, and so
	\begin{align*}
	\frac1n I(\tRZ_V) &= \sum_{q=1}^k I(\RZ_V^{(q)}) \frac{n_q-n_{q-1}}{n}\\
	& "-{^{n\to `8}}>" I(\RZ_V'|\RQ).
	\end{align*}
	Similarly,
	\begin{align*}
	\frac1n H(\tRZ_V) &"-{^{n\to `8}}>" H(\RZ_V'|\RQ) \leq `a
	\end{align*}
	by \eqref{eq:LB:3}, satisfying the entropy limit \eqref{eq:`a}. Hence, $\tRZ_V$ is a valid compressed source, the unconstrained capacity of which is $I(\RZ_V'|\RQ)$, leading to the desired lower bound~\eqref{eq:LB}.
	
	The condition that $\abs {Q}\leq 3$ is admissible follows from the usual argument by the well-known Eggleston--Carath\'eodory theorem. More precisely, let
	\begin{align*}
	\rsfsS:=\{(I(\RZ_V'|\RQ=q),H(\RZ_V'|\RQ=q))\mid &P_{\RZ_V|\RQ=q}=P_{\RZ_V},\\ &\kern-1em H(\RZ_i'|\RZ_i,\RQ=q)=0\}.
	\end{align*}
	It can be seen that the conditions above are equivalent to \eqref{eq:LB:1} and \eqref{eq:LB:2} respectively, and so the set of feasible values to \eqref{eq:LB:c}, namely
	\begin{align*}
	(I(\RZ_V'|\RQ),H(\RZ_V'|\RQ)) = \sum_{q\in Q} P_{\RQ}(q) (I(\RZ_V^{q}),H(\RZ_V^{q})),
	\end{align*}
	is equal to the convex hull of $\rsfsS$. Since the dimension of $\rsfsS$ is at most $2$, the pair $(\tCS(\RZ_V),`a)$ can be obtained as a convex combination of at most $3$ points in $\rsfsS$ as desired by the Eggleston--Carath\'eodory theorem.\qed
\end{proof}

The main results above can be illustrated as follows using the hypergraphical source in Example~\ref{eg:mot} given earlier. In particular, an exact single-letter characterization of $\tCS(`a)$ will be derived, even though such an exact characterization is not known for general hypergraphical sources.

\begin{example}
	\label{eg:mot:2}
	Consider the source defined in \eqref{eq:mot} in Example~\ref{eg:mot}. It will be shown that \eqref{eq:mot:tCS1}  and \eqref{eq:mot:tCS2} are satisfied with equality, which gives the desired single-letter characterization of $\tCS(`a)$.
	
	It is easy to show that $J_{\op {GK}}(\RZ_V)=1$ since $\RX_a$ is the maximum common function of $\RZ_1$, $\RZ_2$ and $\RZ_3$. Hence, the reverse inequality of \eqref{eq:mot:tCS1} follows from Proposition~\ref{pro:GK}.
	
	The reverse inequality for \eqref{eq:mot:tCS2} can be argued using the bound in Theorem~\ref{thm:duality} by $\CS(R)$ and the characterization of $\CS(R)$ in Example~\ref{eg:mot:CSR}. More precisely, by \eqref{eq:mot:CSR1}, that the unconstrained secrecy capacity $\CS(`8)=2$. Then, by \eqref{eq:mot:CSR2}, we have
	$\CS^{-1}(`g)\leq `g-1$ for all $`g\leq \CS(`8)=2$. Now, by \eqref{eq:tCSRCSR}, 
	\begin{align*}
	`a - \tCS(`a) &\leq \CS^{-1}(\tCS(`a))\leq \tCS(`a)-1
	\end{align*}
	and so $\tCS(`a)\leq \frac{1+`a}{2}$ for $\tCS(`a)\leq 2$. This completes the proof.
\end{example}

\section{Conclusion and Extensions}
\label{sec:conclusion}

Inspired by the idea of decremental secret key agreement and its application to the constrained secret key agreement problem, we have formulated a multiterminal secret key agreement problem with a more general source compression step that applies beyond the hypergraphical source model. This formulation allow us to separate and compare the issues of source compression and discussion rate constraint in secret key agreement. While a single-letter characterization of the compressed secrecy capacity and admissible entropy limit remains unknown,  single-letter bounds have been derived and they are likely to be tight for the hypergraphical model, and possibly more general source models such as the finite linear source model~\cite{chan10phd}. For the PIN model, in particular, the bounds are tight, giving rise to a complete characterization of the capacity in Theorem~\ref{thm:PIN}. One way to improve the current converse results is to show whether the equality condition in Proposition~\ref{pro:GK} is necessary, that is, $\tCS(`a)<`a$ for $`a>J_{GK}(\RZ_V)$. By the duality in Theorem~\ref{thm:duality}, the condition is necessary if one can show that $\CS (0)= J_{\op {GK}}(\RZ_V)$, i.e., \eqref{eq:CS0} holds with equality. Such equality can be proved for hypergraphical as well as finite linear sources by extending the lamination techniques in \cite{chan17cska}. It is hopeful that a complete solution can be given for the finite linear source model and the well-known jointly gaussian source model. The bounds~\eqref{eq:duality} in the duality result may plausibly be tight for these special sources, in which case non-interactive discussion suffices to achieve the constrained secrecy capacity. The current achievability results may also be improved. In particular, for the two-user case with joint entropy constraint~\eqref{eq:tCS2:ML}, the lower bound in \eqref{eq:LB} can be improved to $\tCS(`a)\geq \max I(\RZ'_1\wedge \RZ'_2)$ where $I(\RZ'_1\wedge \RZ_1)+I(\RZ'_2\wedge \RZ_2)\leq `a$ and $\RZ'_1"-"\RZ_1"-"\RZ_2"-"\RZ'_2$. Whether this improvement is strict or is the best possible is not clear yet but an extension to the multi-user case seems possible. A related open problem is to characterize the $\CS(R)$ in the two-user case with two-way non-interactive discussion. A simpler question is whether two-way non-interactive discussion can be strictly better than one-way discussion.

As pointed out before, by regarding the secrecy capacity as a measure of mutual information, an optimal source compression scheme translates to a dimension reduction technique potentially useful for machine learning. 
A closely related line of work is the study of the strong data processing inequality in~\cite{erkip98,anantharam13,anantharam14}, in particular, the ratio $s^*(\RZ_1;\RZ_2):=\sup\frac{I(\RZ'_1 \wedge \RZ_2)}{I(\RZ_1' \wedge \RZ_1)}$ where, as in \eqref{eq:tCS1}, the supremum is taken over the choice of the conditional distribution $P_{\RZ'_1|\RZ_1,\RZ_2}$ such that $\RZ'_1"-"\RZ_1"-"\RZ_2$ forms a Markov chain and $I(\RZ'_1\wedge \RZ)>0$. It is straightforward to show that $\sup_{`a\geq 0}\frac{\tCS(`a)}{`a}$ for the two-user case in \eqref{eq:tCS2:ML} is upper bounded by $s^*(\RZ_1;\RZ_2)$ and $s^*(\RZ_2;\RZ_1)$. However, a sharper bound and a more precise mathematical connection may be possible, and the result may be extended to the multivariate case. Furthermore, the linearization considered in \cite{huang12} may potentially be adopted to provide a single-letter lower bound on the compressed secrecy capacity. As in \cite{anantharam13,LCV16}, the problem may also be related to a notion of maximum correlation appropriately extended to the multivariate case.

	\section*{Acknowledgment} 
	%
	%
	
The author would like to thank Dr.\ Ali Al-Bashabsheh for pointing out a mistake in an earlier proof and Qiaoqiao Zhou for the discussion of the two-user case. The author would also like to thank Prof.\ Shao-Lun Huang, Prof.\ Navin Kashyap, and Manuj Mukherjee for their valuable comments and pointers to related work.

	\bibliographystyle{splncs03}
	\bibliography{IEEEabrv,ref}
	
\end{document}